\documentclass[a4paper,fleqn]{cas-dc}

\usepackage[numbers]{natbib}
\usepackage{mathtools, amsfonts, amssymb, amsthm}
\usepackage{shuffle}
\usepackage{graphicx}
\usepackage{adjustbox}
\usepackage{booktabs, tabularx, multirow}
\usepackage{enumitem}
\usepackage{ragged2e}
\usepackage{microtype}
\usepackage{placeins}
\usepackage[nameinlink,noabbrev,capitalize]{cleveref}
\usepackage{soul}
\setlist{nosep}

\makeatletter
\g@addto@macro\UrlBreaks{\do\/\do\_\do\-\do\.}
\makeatother

\numberwithin{equation}{section}

\newcommand{\E}{\mathbb{E}}
\newcommand{\PP}{\mathbb{P}}
\newcommand{\R}{\mathbb{R}}
\newcommand{\dd}{\mathop{}\mathrm{d}}

\newcommand{\df}{\coloneqq}
\renewcommand{\vec}[1]{\operatorname{vec}\!\left(#1\right)}

\DeclareMathOperator{\Var}{Var}
\DeclareMathOperator{\Cov}{Cov}

\let\D\dd  

\theoremstyle{plain}
\newtheorem{theorem}{Theorem}[section]
\newtheorem{proposition}[theorem]{Proposition}
\newtheorem{lemma}[theorem]{Lemma}
\newtheorem{corollary}[theorem]{Corollary}
\theoremstyle{definition}

\newtheorem{assumption}[theorem]{Assumption}
\newtheorem{example}[theorem]{Example}
\theoremstyle{remark}
\newtheorem{remark}[theorem]{Remark}

\crefname{assumption}{assumption}{assumptions}
\Crefname{assumption}{Assumption}{Assumptions}

\begin{document}

\shorttitle{Signature-Based Optimal Execution for Statistical Arbitrage}

\title[mode=title]{\texorpdfstring{Signature-Based Optimal Execution for Statistical Arbitrage\\ with Path-Dependent Trading Signals}{Signature-Based Optimal Execution for Statistical Arbitrage with Path-Dependent Trading Signals}}

\author[1,3]{Gianmarco Morbelli}[]
\ead{g.morbelli@uva.nl}

\author[2,3]{Sven Karbach}[]
\ead{sven@karbach.org}

\author[1]{Mike Derksen}[]
\ead{mike.derksen@deepbluecap.com}

\affiliation[1]{%
  organization={Deep Blue Capital B.V.},
  city={Amsterdam},
  country={The Netherlands}
}

\affiliation[2]{%
  organization={Informatics Institute, University of Amsterdam},
  addressline={LAB42, Science Park 900},
  city={Amsterdam},
  postcode={1098 XH},
  country={The Netherlands}
}

\affiliation[3]{%
  organization={Korteweg-de Vries Institute for Mathematics, University of Amsterdam},
  addressline={Science Park 105–107},
  city={Amsterdam},
  postcode={1098 XG},
  country={The Netherlands}
}

\begin{abstract}
We develop a signature-based framework for optimal execution in statistical arbitrage strategies with path-dependent predictive signals. Both the alpha process and the trading speed are modelled as linear functionals of the truncated signature of a time-augmented market path, placing signal generation and execution on the same truncated signature basis. This allows the trading rule to react to the realised history of the signal while accounting for temporary impact, inventory exposure, terminal liquidation, and approximate dollar neutrality.
The main contribution is a quadratic reduction theorem: within the class of signature-linear trading speeds, the restricted path-dependent execution problem becomes a finite-dimensional concave quadratic programme in the policy coefficients. After running synthetic experiments under a mean-reverting log-spread model, we find that the fitted policy achieves a higher return on turnover than a classical $z$-score  threshold benchmark. We show how the same workflow can be deployed on a historical equity pairs-trading backtest, where the fitted signature policy again outperforms the benchmark in accounting terms.
\end{abstract}

\begin{keywords}
statistical arbitrage \sep optimal execution \sep path signatures \sep pairs trading
\end{keywords}

\maketitle
\setcounter{page}{1}
\section{Introduction}

Statistical-arbitrage strategies depend not only on the quality of their predictive signals, but also on how those signals are executed.
In liquid markets, relative-value signals often decay on the same time scale on
which temporary impact, inventory risk, and neutrality constraints become
material. A trading rule for statistical arbitrage must therefore do more than decide whether a spread is rich or cheap and also decide how aggressively to trade, how much inventory to hold, and how quickly to reduce exposure as the opportunity evolves. This paper studies the execution of path-dependent statistical arbitrage signals. The motivating example is pairs trading: a spread signal indicates which long--short position is attractive, but the realised history of the spread, the cost of trading, and the requirement to remain approximately dollar-neutral all affect the optimal execution rule. Classical threshold strategies, such as $z$-score entry and exit rules, usually separate signal generation from execution. They specify when to enter and exit a trade, but they do not solve an execution problem with impact, inventory penalties, and terminal liquidation \cite{GatevGoetzmannRouwenhorst2006Pairs,AvellanedaLee2010StatArb}. Conversely, classical optimal-execution and algorithmic-trading models study temporary,
permanent, and transient impact under linear-quadratic or convex objectives
\cite{AlmgrenChriss2000OptimalExecution,Cartea2015AlgorithmicTrading,LorenzSchied2013TransientImpact,Curato0201201}, but their canonical schedules are usually driven by clock time, current inventory, or a low-dimensional Markov state rather than by the realised path of a statistical arbitrage signal. Alpha-aware execution and data-rich microstructure models incorporate
predictors such as order-flow imbalance \cite{Kolm2023DeepOFI}, but they usually
rely on parametric or learned state representations and do not yield an
explicit matrix reduction for general path-dependent signals. The goal of this paper is to connect these viewpoints. We ask whether a path-dependent statistical arbitrage signal and an execution-aware trading rule can be represented in the same feature space, so that the resulting control problem remains computationally tractable. The answer developed here uses path signatures, a sequence of iterated integrals of an observed path, which can be read as a systematic feature dictionary for
the realised history of the market.
Signature methods in finance were introduced in
\cite{GyurkoLyons2014ExtractingSignature} and applied to execution and portfolio optimisation in
\cite{optimal_execution,double_execution_signatures}. Closest to our setting,
\cite{futter2023signaturetradingpathdependentextension} develop signature
trading strategies with path-dependent signals in a mean--variance framework; neural, controlled-differential-equation, and
kernelised signature models use this feature map for rich prediction and
simulation
\cite{Buehler2020MarketGeneratorSignatures,Kidger2020NeuralCDE,3454287.3454566,Lu2024SignatureKernelMMD,Manten2022SignatureKernelCI}.
Relative to the existing signature-execution literature, the contribution here
is to formulate an execution problem for statistical arbitrage signals rather
than for a pure liquidation or multi-execution objective. The same signature
state carries both the exogenous alpha signal and the trading speed rule, while
the objective is flexible and it can combine signal reward, temporary impact, inventory risk,
terminal liquidation, and dollar-neutrality penalties. Most importantly, we then show how all these constraints remain inside one finite-dimensional quadratic programme.

\paragraph{Our approach.}
We model both the predictive alpha signal and the trading speed as linear
functionals of the same truncated signature state.
The signal matrix determines which path features predict returns, and the execution
matrix determines how the strategy trades in response to those features.
Because both objects live on the same tensor basis, impact costs, inventory
risk, dollar-neutrality penalties, and terminal liquidation penalties reduce to
a quadratic form in the execution coefficients. 
As such, the resulting optimisation is static and we can solve for the trading speed via a standard quadratic decomposition. No dynamic programming or HJB equation is solved during calibration, and no
iterative optimisation is needed during live execution. The practical attraction is that the expensive step happens off-line.
The objective matrices are estimated from historical paths, Monte Carlo simulations, or computed analytically when suitable.
After this calibration step, the live trading speed can be obtained simply by computing the current signature features and applying one
matrix-vector multiplication with the calibrated speed matrix.
For an execution desk, this means that path-dependent adaptivity can be added
without introducing a real-time dynamic-programming layer.

\paragraph{Paper Outline.}
We organize the paper as follows: \newline
\Cref{sec:framework} introduces the signature-based execution framework. After defining the lifted information path and the truncated signature state, we model both the predictive signal and the trading speed as linear functionals of the same path-feature vector. We then state the execution objective and prove the quadratic reduction theorem, which shows that the restricted path-dependent control problem becomes a finite-dimensional concave quadratic programme in the trading speed coefficients. The section also discusses the closed-form optimiser, regularisation, and the connection with classical Almgren--Chriss and linear-quadratic execution models. \Cref{sec:synthetic_historical_results} then presents the trading experiments. We test the framework on a synthetic common-trend log-spread model with an Ornstein--Uhlenbeck spread and compare the learned signature policy with a standard $z$-score pairs trading benchmark. A historical workflow then applies the same calibration procedure to an equity pair, illustrating the practical deployment of the framework on real data.
The appendices provide the supporting derivations and diagnostics. \Cref{app:tensor explicit computation} gives the explicit tensor expansion behind the matrices of the quadratic objective. \Cref{app:shuffle_algebra} presents the equivalent shuffle-algebra formulation. \Cref{sec:level2_structure} derives the moment inputs and closed-form OU Gram blocks used to validate empirical calibration. \Cref{subsec:reduction_check} reports numerical checks of the quadratic reduction, matrix conditioning, truncation-order effects and parameter sensitivity.

\begin{table}[ht]
\centering
\caption{Summary of principal notation.}
\label{tab:notation}
\small
\setlength{\tabcolsep}{4pt}
\begin{tabularx}{\linewidth}{@{}l>{\RaggedRight\arraybackslash}X@{}}
\toprule
Symbol & Meaning \\
\midrule
$Z_t\in\R^{d_z}$ & Exogenous information (market) path \\
$P_t\in\R^n$ & Unaffected mid-prices ($n$ assets) \\
$S(Z)_{s,t}$ & Signature of $Z$ over $[s,t]$ \\
$S^{\le N}(Z)_{0,t}$ & Truncated signature at level $N$ \\
$x_t\in\R^m$ & Coordinate vector of $S^{\le N}(Z)_{0,t}$  \\
$\psi_t\in\R^k$ & Projected low-dimensional basis  \\
$y_t := \int_0^t x_u\,\dd u$ & Time-integral of signature features \\
$r_t := \int_0^t \psi_u\,\dd u$ & Time-integral of projected basis \\
$v_t\in\R^n$ & Trading speed (shares per unit time) \\
$\alpha_t\in\R^n$ & Predictive signal  \\
$K\in\R^{n\times m}$ & Signal matrix; $\alpha_t = Kx_t$ \\
$B\in\R^{n\times m}$ & Trading-speed matrix; $v_t = Bx_t$ \\
$\theta = \vec{B}\in\R^{nm}$ & Column-major vectorisation of $B$ \\
$Q_t\in\R^n$ & Inventory (cumulative position) \\
$A\in\R^{nm\times nm}$ & Curvature matrix of the reduced quadratic objective \\
$b\in\R^{nm}$ & Linear coefficient of reduced objective \\
$\tilde\Lambda\in\R^{n\times n}$ & Temporary impact matrix \\
$\Sigma\in\R^{n\times n}$ & Inventory risk (urgency) matrix  \\
$\phi\ge 0$ & Inventory risk (urgency) weight \\
$\eta\ge 0$ & Dollar-neutrality weight \\
$\gamma\ge 0$ & Terminal inventory penalty \\
$G_\psi$ & Gram matrix $\E\left[\int_0^T\psi_t\psi_t^\top\,\dd t\right]$ \\
$G_r$ & Gram matrix $\E\left[\int_0^T r_t r_t^\top\,\dd t\right]$ \\
$\mathbb{A}_t$ & L\'evy area \\
$z_t$ & Rolling spread $z$-score used in the mean-reversion signal \\
$c_\alpha$ & Signal scale converting $z_t$ into instantaneous-return units \\
$\beta$ & Hedge ratio in the spread definition \\
$\tau$ & Rolling-window length for the $z$-score  \\
$\kappa$ & OU mean-reversion speed \\
$\sigma_M,\sigma_X$ & Common-trend and log-spread volatilities in the synthetic data-generating process \\
$\lambda_{\mathrm{ridge}}$ & Ridge shift used in the empirical matrix solve \\
$M_i,R_i$ & Closed-form OU moment integrals used in $G_\psi$ and $G_r$ \\
\bottomrule
\end{tabularx}
\end{table}

\section{Signature-Based Execution Framework}
\label{sec:framework}

We formulate the statistical arbitrage execution problem directly in signature
feature space.
The same truncated signature drives both alpha generation and the execution
policy, so signal extraction and trading frictions are handled within one model
rather than in two separate layers.
The guiding idea is simple: instead of forcing the problem into a small
Markovian state, we keep the path dependence explicit but choose a feature
class whose algebra is still tractable.

\subsection{Market Path and Signature Features}
\label{sec:signature theory}

Fix a finite time horizon $T>0$.
Let $(\Omega,\mathcal F,(\mathcal F_t)_{t\in[0,T]},\PP)$ be a filtered probability space satisfying the usual conditions, with $t$ the time parameter.
We consider $n$ tradable asset processes $P^{(i)}_t,i=1,\ldots,n$, and collect all exogenous information in a continuous, $\mathbb F$-adapted path
\begin{align}\label{eq:state_path_framework}
    Z_t \in \R^{d_z},\qquad t\in[0,T].
\end{align}
The unaffected mid-prices are extracted by a fixed projection $\pi_P\colon \R^{d_z}\to \R^n$,
\begin{align}\label{eq:unaffected_price_projection}
    P_t \df \pi_P Z_t
    =
    (P_t^{(1)},P_t^{(2)},\ldots,P_t^{(n)})^\top .
\end{align}
The minimal specification is \[Z_t=(t,P_t^{(1)},P_t^{(2)},\ldots,P_t^{(n)}),\] but one may append spread factors or microstructure covariates without changing the algebraic arguments below.
In trading terms, $(Z_t)_{t\leq T}$ is the information stream to which the execution rule is allowed to react.
It is not controlled by the strategy; in particular, we work in a temporary-impact setting.

Throughout the paper, we work under the following general standing assumption.

\begin{assumption}[Geometric lift of the information path]
\label[assumption]{ass:framework_lift}
The information stream $Z=(Z_t)_{t\in[0,T]}$ is given together with a chosen $(\mathcal F_t)_{t\in[0,T]}$-adapted geometric rough-path lift
\[
    \mathbf Z
    =
    \bigl(1,\mathbf Z^{1},\ldots,\mathbf Z^{\lfloor p\rfloor}\bigr)
    \in G\Omega_p([0,T];\mathbb R^{d_z})
\]
of finite $p$-variation, for some $p\geq 1$. The first level of the lift is the increment of the information path, \[ \mathbf Z^{1}_{s,t}=Z_t-Z_s, \qquad 0\leq s\leq t\leq T, \] and, for $k=2,\ldots,\lfloor p\rfloor$, $ \mathbf Z^{k}_{s,t}\in(\mathbb R^{d_z})^{\otimes k} $ denotes the $k$-th order iterated-integral component of the chosen lift. A crucial object is the extension to any $N\geq p$ in the limit $N \rightarrow\infty$ which defines the signature of the lifted path $\mathbf Z$. We will denote the signature as $S(Z)_{s,t}$ instead of $S(\mathbf{Z})_{s,t}$ when the lift is clear from context \cite{LyonsCaruanaLevy2007}\cite{optimal_execution}. 
\end{assumption}

We stress that \Cref{ass:framework_lift} is not limiting: as noted in \cite{optimal_execution} and the references therein, many processes of interest, including continuous semimartingales and discretely observed market data, admit suitable lifts to geometric rough paths and are thus amenable to our framework.

We now proceed to introduce the basis theory behind signatures that appeared already in \Cref{ass:framework_lift}:

Let
\begin{equation}\label{eq:extended tensor algebra}
    \mathcal T((\R^{d_z}))
    :=
    \prod_{k=0}^{\infty}(\R^{d_z})^{\otimes k}
\end{equation}
denote the extended tensor algebra over $\R^{d_z}$.
For a word $I=(i_1,\dots,i_k)$ with letters in $\{1,\dots,d_z\}$, define the corresponding Stratonovich iterated integral by
\begin{equation}
S^I(Z)_{s,t}
:=
\int_{s<u_1<\cdots<u_k<t}
\circ dZ_{u_1}^{i_1}\cdots \circ dZ_{u_k}^{i_k}.
\end{equation}
The signature of $Z$ over $[s,t]$ is the tensor series in $\mathcal T((\R^{d_z}))$ given by
\[
\begin{aligned}
S(Z)_{s,t}
=
\biggl(&
1,
\int_s^t \circ dZ_u,\\
&
\int_{s<u_1<u_2<t}
\circ dZ_{u_1}\otimes \circ dZ_{u_2},
\dots
\biggr),
\end{aligned}
\]
which satisfies Chen's identity 
and the shuffle product relation
\begin{equation}\label{eq:shuffle_product}
S^I(Z)_{s,t}\cdot S^J(Z)_{s,t} = \sum_{K\in I\shuffle J} S^K(Z)_{s,t},
\end{equation}
where $I\shuffle J$ is the shuffle product between words $I$ and $J$ \cite{Lyons1998,Chevyrev_2025}.

Fix a truncation level $N\in\mathbb N$ and define the truncated tensor algebra
$$
\mathcal T^{\le N}(\R^{d_z})
:=
\bigoplus_{k=0}^N (\R^{d_z})^{\otimes k}.
$$
The truncated signature over $[0,t]$ is
$$
S^{\le N}(Z)_{0,t}
:=
\bigl(S^I(Z)_{0,t}\bigr)_{|I|\le N}
\in
\mathcal T^{\le N}(\R^{d_z}),
$$
and we write
\begin{equation}\label{eq:signature vector}
    \Phi_t \df S^{\le N}( Z)_{0,t}.
\end{equation}

After fixing a basis of $\mathcal T^{\le N}(\R^{d_z})$, we identify $\Phi_t$ with a coordinate vector
$$
x_t \in \R^m,
\qquad
m=\dim\bigl(\mathcal T^{\le N}(\R^{d_z})\bigr).
$$
Linear functionals on the truncated signature then become matrix-vector products in $x_t$.
This is the object that will play the role of the state variable in the reduced
optimisation problem.
Unlike a Markov state, $x_t$ is not postulated as a small sufficient statistic.
It is a finite-dimensional feature representation of the whole observed path
over $[0,t]$ \cite{Lyons1998,LevinLyonsNi2013LearningFromPast, Chevyrev_2025}.

\begin{assumption}[Integrability of feature moments]
\label[assumption]{ass:framework_moments}
Set $y_t:=\int_0^t x_u\,\D u.$
The moments needed below are finite:
\begin{align*}
    \hspace{-7mm}\E\left[\int_0^T \|x_t\|^2\,\D t\right]<\infty,\quad
    \E\left[\int_0^T \|P_t\|^2\|y_t\|^2\,\D t\right]<\infty.
\end{align*}
\end{assumption}

\subsection{Signal, Control, and Objective}\label{sec:signal_control_execution}

The vector $x_t$ summarizes the past trajectory of the information path $Z$
through a finite collection of signature features \cite{Chevyrev_2025,Fermanian2021Embedding}.
From this point onward, the model separates cleanly into three ingredients:
an exogenous signal $\alpha_t$, an admissible trading speed $v_t$, and the
inventory process $Q_t$ induced by that speed.
The signal and the speed use the same feature vector, but only the speed
coefficients are chosen by the optimiser.

\paragraph{Predictive signal.}
We model the short-term predictive signal (or expected instantaneous return)
of the tradable assets as a linear function of the signature features,
\begin{equation}\label{eq:alpha_framework}
\alpha_t := K x_t \in\R^n,
\end{equation}
where $K\in\R^{n\times m}$ is a parameter matrix that may be estimated from historical data.
This specification allows nonlinear dependence on the past trajectory of the market
through the signature features while preserving linearity in the model parameters.

\paragraph{Trading policy.}
We restrict admissible trading policies to the class of signature-linear controls
\begin{equation}\label{eq:control_framework}
v_t := B x_t \in\R^n,
\end{equation}
where $B\in\R^{n\times m}$ is a parameter matrix.
The process $v_t$ represents the trading speed (shares per unit time) of $n\in\mathbb{N}$ underlying assets,
and the inventory $Q_t \in \mathbb{R}^{n}$ evolves according to
\begin{align}\label{eq:inventory_def}
\D Q_t = v_t\,\D t, \quad t\geq 0.
\end{align}
This parametrization defines a finite-dimensional family of path-dependent
trading strategies.
The optimisation problem is therefore not over an arbitrary adapted process but
over the coefficient matrix $B$.
We adopt the convention that positive trading speed corresponds to buying and
negative trading speed to selling.
Therefore, all timing decisions are encoded in $B$: two paths with different
signature histories can induce different trading speeds even if the current
clock time is the same.

\begin{assumption}[Admissibility and exogeneity]\label[assumption]{ass:framework_controls}
The control $v$ is progressively measurable and satisfies
$$
\E\left[\int_0^T \|v_t\|^2 \D t\right] < \infty.
$$
Moreover, the law of the information path $Z$ does not depend on the choice of $B$, i.e.\ we work in a temporary-impact setting.
\end{assumption}
\paragraph{Trading costs.}
As stated previously, we consider temporary execution costs described by a
cross-impact matrix $\tilde{\Lambda}\in\R^{n\times n}$, such that for a given
execution speed $v_t$ we incur a cost $v_t^{\top}\tilde\Lambda v_t$ per
unit of time. We assume that the impact matrix
$\tilde\Lambda\in\R^{n\times n}$ is symmetric positive definite.

\subsection{Examples of Signals}

\begin{example}\label[example]{example:z_score}
To illustrate how \eqref{eq:alpha_framework} captures classical statistical
arbitrage signals, consider a pair of assets with prices $P_t^{(1)}$ and
$P_t^{(2)}$ and define their log spread
$S_t = \log{(P_t^{(1)})} -\beta \log{( P_t^{(2)})}$. Here, $\beta$ is the
slope coefficient of the linear regression between the returns of the two
assets\footnote{The regression slope $\beta$ is also known as the hedge
ratio.}.
A standard mean-reversion strategy trades against the normalized spread
deviation \cite{GatevGoetzmannRouwenhorst2006Pairs,AvellanedaLee2010StatArb,Krauss2017StatArbSurvey} 
\begin{align}
    z_t
    &= \frac{S_t-\mu_t}{\sigma_t}, \quad  \mu_t= \frac{1}{\tau}\int_{t-\tau}^{t} S_u\,\mathrm{d}u, \\[0.4em]
    \sigma_t
    &=
    \left(
        \frac{1}{\tau}\int_{t-\tau}^{t}
        \left(S_u-\mu_t\right)^2\,\mathrm{d}u
    \right)^{1/2},
\end{align}
producing predictive returns $\alpha_t = c_\alpha(-z_t, z_t/\beta)^\top$. The signal $z_t$ is commonly known as the $z$-score\footnote{For \Cref{ass:framework_lift} to apply with $z$ as an information channel, $\sigma_t$ must remain bounded away from zero on the trading window.} \cite{AvellanedaLee2010StatArb,CaldeiraMoura2013Pairs}.
Here, $c_\alpha > 0$ is a suitable scaling constant that converts the dimensionless
$z$-score into instantaneous-return units. 
If the $z$-score is appended explicitly to the information path,
$Z_t = (t, P_t^{(1)}, P_t^{(2)}, z_t)$,
then the $z$-score level-one coordinate is the increment
$z_t-z_0$; in the scalar-augmented vector
$(1,\Delta t,\Delta P^{(1)},\Delta P^{(2)},\Delta z)$, it appears as the
fifth entry. Thus the rolling statistic is treated as an observed
exogenous channel, not as a quantity generated by the signature of prices
alone.  The signal matrix for this choice of signal can be expressed as\footnote{Setting $z_0=0$ without loss of generality.}
$$
K
=
\begin{pmatrix}
0 & 0 & 0 & 0 & -c_\alpha & 0 & \cdots & 0 \\
0 & 0 & 0 & 0 & c_\alpha/\beta & 0 & \cdots & 0
\end{pmatrix}
\in \mathbb{R}^{2\times m}.
$$
Higher-order signature terms in $x_t$ are multiplied by zero in $K$ here,
but they appear in the trading-speed matrix $B$ and therefore influence the
execution timing.
The specification \eqref{eq:alpha_framework} thus generalises classical
mean-reversion signals while allowing additional path-dependent predictors
encoded by higher-order signature terms.
\end{example}

\begin{example}[L\'{e}vy area as a path-geometry signal]
\label[example]{rem:levy_area_signal}
Another example can be constructed using the following path feature: at truncation level $N=2$, the signature state includes the antisymmetric
second-order coordinate
\begin{equation*}
\begin{aligned}
\mathbb{A}_t = S(\mathbf{Z})_{0,t}^{(1,2)} -  S(\mathbf{Z})_{0,t}^{(2,1)} 
&=
\int_0^t \bigl(P_u^{(1)}-P_0^{(1)}\bigr)\,\circ\dd P_u^{(2)} \\
&\!-
\int_0^t\! \bigl(P_u^{(2)}-P_0^{(2)}\bigr)\,\circ\dd P_u^{(1)},
\end{aligned}
\end{equation*}
also known as the L\'evy area of the two price processes.
Geometrically, $\mathbb{A}_t$ measures the signed area enclosed by the
trajectory $(P^{(1)}, P^{(2)})$ in the price plane.
Large positive (negative) $\mathbb{A}_t$ indicates that asset $P^{(1)}$
has been systematically leading (lagging) $P^{(2)}$ over $[0,t]$.
Incorporating $\mathbb{A}_t$ into the signal matrix $K$ therefore allows the
strategy to distinguish between transient dislocations (one asset briefly moves
ahead, producing large area) and persistent divergence (both assets drift apart
simultaneously, producing little area). By antisymmetry, some paths may have
zero signed area despite non-trivial joint movement, so the L\'evy area should
be interpreted as one directional path-geometry feature rather than a complete
measure of pair dependence. For example, in \cite{segmented_levy_area}, they propose an extension of the L\'evy area that is always positive and aims to improve the strength of this signal.
This is one of the possible geometric features that the level-two signature makes available
to the policy class.
\end{example}

\subsection{Objective Framework}
We now introduce the main objective of the paper.
The trader chooses the control matrix $B$ so as to maximise
\begin{align}\label{eq:objective_framework}
J(B)
&:=
\E\Big[
\int_0^T\Big(Q_t^\top \alpha_t - v_t^\top \tilde\Lambda v_t
- \phi\,Q_t^\top \Sigma Q_t \nonumber\\
&\hspace{17mm}
- \eta (Q_t^\top P_t)^2 \Big)\D t
-\gamma\|Q_T\|^2
\Big]
\end{align}
for parameters $\phi, \eta, \gamma \ge 0$.
This objective has a natural relative-value interpretation:
\begin{itemize}
\item $Q_t^\top \alpha_t$ rewards holding inventory in directions
supported by the predictive signal, capturing the statistical arbitrage
component of the strategy.
\item $v_t^\top \tilde\Lambda v_t$ models quadratic temporary
execution costs, with temporary impact matrix $\tilde\Lambda$ possibly accommodating cross-impact between assets. By construction, $\tilde{\Lambda}$ must be positive definite, as we cannot allow for costless trades.
\item $\phi\,Q_t^\top \Sigma Q_t$, with $\Sigma\in\R^{n\times n}$ symmetric positive semidefinite, penalises inventory risk (aversion to adverse
price moves while holding a position), with $\phi\ge 0$ controlling the
urgency to reduce exposure.
Setting $\phi=0$ disables the inventory-risk term, which is appropriate
in statistical arbitrage settings where the strategy deliberately builds
and unwinds inventory in response to the signal.
\item $\eta(Q_t^\top P_t)^2$ penalises net dollar exposure,
promoting approximately dollar-neutral portfolios. 
\item $\gamma\|Q_T\|^2$ is a soft terminal liquidation penalty that discourages
residual inventory at the end of the trading horizon.
\end{itemize}\medskip
\begin{remark}[Execution-only specialisation]\label{rem:execution_only}
Setting $\alpha_t\,\dd t=\dd P_t$
turns the reward
$\int_0^T Q_t^\top\alpha_t\,\dd t$ into $\int_0^T Q_t^\top\dd P_t$, which is the
mark-to-market value of inventory accrued along the realised price path.
Integration by parts then reduces \Cref{eq:objective_framework} to a pure
execution problem with temporary impact and dollar-neutrality penalty; classical
liquidation formulations of \cite{AlmgrenChriss2000OptimalExecution,Cartea2015AlgorithmicTrading,optimal_execution}
are recovered by replacing the neutrality term with a terminal-inventory
target.
\end{remark}

\begin{remark}[Units]\label[remark]{rem:units}
We work in share-based units throughout: $Q_t$ is measured in shares,
$v_t$ in shares per unit time, and $P_t$ in currency per share. The
dollar-neutrality term $(Q_t^\top P_t)^2$ then has units of currency
squared, and the impact cost $v_t^\top\tilde\Lambda v_t$ has units of
currency per unit time after $\tilde\Lambda$ absorbs the residual
dimensional factor. The predictive signal $\alpha_t$ is the
share-normalised expected instantaneous return, i.e.\ currency per share
per unit time, so $Q_t^\top\alpha_t$ has units of currency per unit time
and integrates over $[0,T]$ to a currency reward. The reduced objective values are therefore
price-scaled scores unless this signal is converted to currency-per-share units
by multiplying by current prices.
\end{remark}

\begin{remark}
The objective in \Cref{eq:objective_framework} is a modelling choice, chosen to follow standard conventions in the optimal execution literature. The framework itself is not tied to this particular specification. Any predictive signal that can be suitably represented, or approximated, as a linear functional of the chosen signature features can be incorporated by changing the signal matrix $K$. Likewise, alternative execution penalties or trading constraints can be added whenever they preserve the quadratic structure of the problem.
\end{remark}
\subsection{Quadratic Reduction}\label{sec:quadratic reduction}

We now state the central structural result of the paper. Throughout this section we use the
\emph{column-major} vectorisation convention
$\vec{\cdot}$, which stacks columns. In particular, for
$u\in\R^m$, $w\in\R^n$, and $B\in\R^{n\times m}$,
\begin{equation}
w^\top B u = \vec{B}^\top (u\otimes w).
\end{equation}
Indeed, $w^\top B u = \sum_{i=1}^{n}\sum_{j=1}^m w_i B_{ij} u_j$ is equal to
$\vec{B}^\top (u\otimes w)$ because $(u\otimes w)$ stacks the products $w_i u_j$ in the same order as $\vec{B}$ stacks $B_{ij}$ under column-major vectorisation.

\begin{theorem}[Feature-linear quadratic reduction for signature trading speeds]\label{thm:quadratic_reduction_signature}
Under \Cref{ass:framework_lift} - \ref{ass:framework_controls}, let $x_t\in\R^m$ denote the coordinate vector of $\Phi_t=S^{\le N}(\mathbf Z)_{0,t}$ and set
$$
\alpha_t = Kx_t,\qquad v_t = Bx_t,\qquad Q_t = Q_0 + \int_0^t v_u\,\D u,
$$
with $K\in\R^{n\times m}$ fixed and $B\in\R^{n\times m}$ to be optimised. We assume that the signal matrix $K$ and the trading matrix $B$ are
deterministic, $Q_0$ is deterministic, $\tilde\Lambda$ is symmetric positive definite, $\Sigma$ is
symmetric positive semidefinite, and $\phi,\eta,\gamma\ge0$.
Let $\theta:=\vec{B}\in\R^{nm}$ be the column-major vectorisation of $B$.
Then there exist a symmetric matrix $A\in\R^{nm\times nm}$, a vector $b\in\R^{nm}$, and a scalar $c\in\R$ such that the objective \Cref{eq:objective_framework}
can be written as the finite-dimensional quadratic function
\begin{equation}\label{eq:quadratic_problem}
    J(\theta)=\theta^\top A\theta+b^\top\theta+c.
\end{equation}
Furthermore, the tensors $A,b,c$ are deterministic and given by
\begin{align}
A
&=
-\mathbb{E}\Bigg[\int_0^T
(x_t x_t^\top)\otimes \widetilde{\Lambda}\,\mathrm{d}t
\nonumber\\
&\quad
+\int_0^T
(y_t y_t^\top)\otimes
\bigl(\phi\Sigma+\eta P_tP_t^\top\bigr)\,\mathrm{d}t
\nonumber\\
&\quad
+\gamma\,
y_Ty_T^\top\otimes I_n
\Bigg],
\label{eq:A_tensor_explicit}
\\[0.3em]
b
&=
\mathbb{E}\Bigg[\int_0^T
y_t\otimes
\left(
Kx_t
-2\phi\Sigma Q_0
-2\eta P_tP_t^\top Q_0
\right)\mathrm{d}t
\nonumber\\
&\quad
-2\gamma
\bigl[
y_T\otimes Q_0
\bigr]\Bigg],
\label{eq:b_tensor_explicit}
\end{align}
and
\begin{equation}\label{eq:c_tensor_explicit}
\begin{aligned}
    c={}&
    \mathbb{E}\Bigg[
    \int_0^T
    \left(
        Q_0^\top Kx_t
        - \phi Q_0^\top \Sigma Q_0
        - \eta (Q_0^\top P_t)^2
    \right)\mathrm{d}t \\
    &\hphantom{\mathbb{E}\Bigg[}
    - \gamma\lVert Q_0\rVert^2
    \Bigg],
\end{aligned}
\end{equation}
where $I_n$ denotes the $n\times n$ identity matrix and $y_t\coloneqq\int_0^t x_u\,\D u$.
\end{theorem}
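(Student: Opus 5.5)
The plan is to write every term of \eqref{eq:objective_framework} as an explicit polynomial of degree at most two in $\theta=\vec{B}$, using the identity $w^\top Bu=\vec{B}^\top(u\otimes w)$. We then take expectations term by term and read off $A$, $b$, $c$.

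\textbf{Step 1: express the controlled quantities in $\theta$.} Since $B$ is deterministic, $Q_t=Q_0+B\int_0^t x_u\,\D u=Q_0+By_t$. By the Kronecker identity $\vec{BU}=(U^\top\otimes I_n)\vec{B}$, applied with $U=y_t$ or $U=x_t$ viewed as $m\times 1$ matrices,
\[
By_t=(y_t^\top\otimes I_n)\theta,\qquad v_t=Bx_t=(x_t^\top\otimes I_n)\theta .
\]
Then, for any symmetric $M\in\R^{n\times n}$, the mixed-product rule gives
\[
(By_t)^\top M(By_t)=\theta^\top\bigl((y_ty_t^\top)\otimes M\bigr)\theta,\qquad
w^\top By_t=\theta^\top(y_t\otimes w).
\]

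\textbf{Step 2: expand each term pathwise.}
\begin{itemize}
\item Impact term: $v_t^\top\tilde\Lambda v_t=\theta^\top\bigl((x_tx_t^\top)\otimes\tilde\Lambda\bigr)\theta$.
\item Alpha term: $Q_t^\top\alpha_t=Q_0^\top Kx_t+\theta^\top(y_t\otimes Kx_t)$.
\item Inventory term, with $M=\phi\Sigma$: $Q_t^\top MQ_t=Q_0^\top MQ_0+2\theta^\top(y_t\otimes MQ_0)+\theta^\top\bigl((y_ty_t^\top)\otimes M\bigr)\theta$.
\item Neutrality term: $\eta(Q_t^\top P_t)^2=\eta\,Q_t^\top(P_tP_t^\top)Q_t$, so the same expansion applies with $M=\eta P_tP_t^\top$.
\item Terminal term: $\gamma\|Q_T\|^2$ also follows the same expansion, with $M=\gamma I_n$ at $t=T$.
\end{itemize}
Collecting the quadratic, linear and constant pieces, with the signs from the objective, gives the integrands of \eqref{eq:A_tensor_explicit}, \eqref{eq:b_tensor_explicit} and \eqref{eq:c_tensor_explicit}. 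The matrix $A$ is symmetric because every block $(uu^\top)\otimes M$ with $M$ symmetric is symmetric.

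\textbf{Step 3: pass to expectations.} Because $\theta$ is deterministic, it can be pulled outside the expectation. For example, $\E[\int_0^T\theta^\top X_t\theta\,\D t]=\theta^\top\E[\int_0^T X_t\,\D t]\theta$ by linearity and Fubini. This requires each matrix- and vector-valued integrand to be integrable.

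\textbf{Main obstacle: justifying integrability.} I expect this, not the algebra, to be the real work. The checks I would make are as follows.
\begin{itemize}
\item $\E\int\|x_t\|^2\,\D t<\infty$ (\Cref{ass:framework_moments}) controls the impact block directly. By Cauchy--Schwarz, $\|y_t\|^2\le t\int_0^t\|x_u\|^2\,\D u$, which controls the $y_ty_t^\top$ blocks with $\Sigma$ and the $y_Ty_T^\top$ block.
\item The $P_tP_t^\top$ block is controlled by the assumed bound $\E\int\|P_t\|^2\|y_t\|^2\,\D t<\infty$.
\item The linear terms $y_t\otimes Kx_t$ are bounded via $\|y_t\|\|x_t\|$ and Cauchy--Schwarz again.
\item The terms $y_t\otimes P_tP_t^\top Q_0$ are bounded by $\|P_t\|^2\|y_t\|$. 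By Young's inequality this is at most $\|P_t\|^2\|y_t\|^2+\|P_t\|^2$, so an additional second moment of $P$ is needed. This also makes the constant $\E\int(Q_0^\top P_t)^2\,\D t$ finite.
\end{itemize}
If $\E\int\|P_t\|^2\,\D t<\infty$ is not literally covered by the stated assumptions, I would note it as implicit in the requirement that the objective be finite, or derive it from the signature moments when $P$ is a coordinate of $Z$. In the latter case, $P_t-P_0$ is a level-one signature entry, so its square is controlled by $\|x_t\|^2$.

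With the integrands shown to be integrable, Fubini applies and the identity $J(\theta)=\theta^\top A\theta+b^\top\theta+c$ follows, with deterministic $A$, $b$, $c$.
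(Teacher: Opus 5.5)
Your proposal is correct and follows essentially the same route as the paper: write $v_t=(x_t^\top\otimes I_n)\theta$ and $Q_t=Q_0+(y_t^\top\otimes I_n)\theta$, expand the impact, signal, inventory, neutrality and terminal terms pathwise into quadratic, linear and constant pieces in $\theta$ (as in \Cref{app:tensor explicit computation}), and then take expectations with $\theta$ deterministic. Your integrability discussion is more careful than the paper's one-line appeal to \Cref{ass:framework_moments}. In particular, you correctly note that the $Q_0$-dependent neutrality terms in $b$ and $c$ need $\E\int_0^T\|P_t\|^2\,\dd t<\infty$, which that assumption does not literally provide unless $Q_0=0$, $\eta=0$, or $P$ is a channel of $Z$ with square-integrable $P_0$.
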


\begin{proof}
Since $\theta=\vec{B}$, $v_t = Bx_t$ and $y_t=\int_0^t x_u\,\D u$, we have
\begin{equation}
    v_t = (x_t^\top \otimes I_n)\theta \ \Rightarrow \   Q_t = Q_0 + (y_t^\top \otimes I_n)\theta.
\end{equation}
It follows that $v_t$ is linear in $\theta$ and $Q_t$ is affine in $\theta$.
Consequently, each running term in \Cref{eq:objective_framework} is at most quadratic in $\theta$: The temporary-impact term $v_t^\top\widetilde\Lambda v_t$ is quadratic in $\theta$, since $v_t$ is linear in $\theta$. The inventory-risk term $Q_t^\top\Sigma Q_t$, the dollar-neutrality term $(Q_t^\top P_t)^2$, and the terminal penalty $\lVert Q_T\rVert^2$ are also quadratic, because $Q_t$ is affine in $\theta$, while $P_t$ is exogenous from \Cref{ass:framework_controls}. Since $P_t$ does not depend on $B$, hence on $\theta$, the dollar-neutrality term remains quadratic in $\theta$. Finally, \Cref{ass:framework_moments} ensures that the corresponding coefficients are integrable, hence expectation and time integration yield deterministic $A$, $b$, and $c$, with their explicit computations displayed in \Cref{app:tensor explicit computation}.
\end{proof}

\Cref{thm:quadratic_reduction_signature} tells us that the path-dependent stochastic control problem
restricted to signature-linear trading speeds reduces to a finite-dimensional
concave quadratic maximisation in the coefficient vector $\theta$. The reduction is exact
within the fixed policy class; approximation enters only through the chosen
feature set, moment estimation, and any regularisation used in the numerical
solve.

To obtain explicit formulas, we assumed the exogenous signal is represented
in the same coordinate system,
\begin{equation}\label{eq:alpha_feature}
\alpha_t = Kx_t,
\qquad K\in\R^{n\times m}.
\end{equation}
This is the natural specification when both the forecast and the trading
speed are built from the same truncated signature state.
In this representation, the matrix $A$ collects the curvature induced by
execution costs and inventory penalties, while $b$ captures the linear reward
from the predictive signal together with the effect of the initial inventory. We now present a short lemma to ensure that, under natural sign assumptions on the objective terms, the reduced quadratic problem \Cref{eq:quadratic_problem} is concave.

\begin{lemma}[Concavity of the reduced objective]\label[lemma]{lem:concavity_reduced}
Let $J(\theta)=\theta^\top A\theta+b^\top\theta+c$ be the reduction from
\Cref{thm:quadratic_reduction_signature}.
If $\tilde\Lambda\succ0$ is symmetric, $\Sigma\succeq0$ is symmetric,
and $\phi,\eta,\gamma\ge0$, then
$J$ is concave.
\end{lemma}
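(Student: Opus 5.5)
Since $J$ is a quadratic function of $\theta$ with constant Hessian $2A$, concavity is equivalent to $A\preceq 0$. So the plan is to show that $\theta^\top A\theta\le 0$ for every $\theta\in\R^{nm}$. There are two ways to do this: inspect the Kronecker blocks in \eqref{eq:A_tensor_explicit} directly, or go back to the objective, which is cleaner. I would do the latter.

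Fix $\theta=\vec{B}$. The proof of \Cref{thm:quadratic_reduction_signature} gives $v_t=(x_t^\top\otimes I_n)\theta$. It also gives $Q_t-Q_0=(y_t^\top\otimes I_n)\theta$. Using the mixed-product rule, each term of $A$ evaluates as follows:
\begin{equation*}
\theta^\top\bigl((x_tx_t^\top)\otimes\tilde\Lambda\bigr)\theta=v_t^\top\tilde\Lambda v_t ,
\end{equation*}
and, writing $\Delta Q_t:=Q_t-Q_0=By_t$,
\begin{equation*}
\theta^\top\bigl((y_ty_t^\top)\otimes(\phi\Sigma+\eta P_tP_t^\top)\bigr)\theta=\phi\,\Delta Q_t^\top\Sigma\,\Delta Q_t+\eta\,(\Delta Q_t^\top P_t)^2 .
\end{equation*}
The terminal term similarly evaluates to $\gamma\|\Delta Q_T\|^2$. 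Hence
\begin{equation*}
\theta^\top A\theta=-\E\Big[\int_0^T\big(v_t^\top\tilde\Lambda v_t+\phi\,\Delta Q_t^\top\Sigma\,\Delta Q_t+\eta(\Delta Q_t^\top P_t)^2\big)\D t+\gamma\|\Delta Q_T\|^2\Big].
\end{equation*}

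Each integrand is nonnegative pathwise. This uses $\tilde\Lambda\succ0$, $\Sigma\succeq0$, $\phi,\eta,\gamma\ge0$, and the fact that squares are nonnegative. Integrating over time and taking expectations preserves the sign, so $\theta^\top A\theta\le0$. The expectations are finite by \Cref{ass:framework_moments} and \Cref{ass:framework_controls}; this justifies exchanging $\theta$ with $\E$ and $\int$ by linearity. Therefore $A\preceq0$, and $J$ is concave.

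The only delicate point is the Kronecker identity $\theta^\top((uu^\top)\otimes M)\theta=(Bu)^\top M(Bu)$. It follows from the column-major identity $(u^\top\otimes I_n)\vec{B}=Bu$, together with $(uu^\top)\otimes M=(u\otimes I_n)M(u^\top\otimes I_n)$. Checking this factorisation is the step I expect to need the most care; the rest is sign bookkeeping. Strict concavity would additionally need $\E\int_0^T x_tx_t^\top\D t\succ0$, but that is not claimed here.
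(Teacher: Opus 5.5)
Your proof is correct and is essentially the paper's argument. Both show that $-A$ is an expectation of time integrals of positive semidefinite Kronecker blocks, and that integration and expectation preserve this; the paper cites the positive semidefiniteness of Kronecker products of positive semidefinite factors, whereas you verify it directly by writing $\theta^\top A\theta$ as minus the expected penalties evaluated at $v_t=Bx_t$ and $\Delta Q_t=By_t$, which is equally valid and also shows that $-\theta^\top A\theta$ is the expected cost of the policy started from zero inventory.
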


\begin{proof}
We show that every term inside the expectation of the curvature matrix \Cref{eq:A_tensor_explicit} is positive semidefinite.
First, all outer products are naturally positive semidefinite: $x_t x_t^\top\succeq0, \ y_t y_t^\top\succeq0$ and $P_tP_t^\top\succeq0$. By construction,
$\widetilde\Lambda\succ0$, so
$ (x_t x_t^\top)\otimes\widetilde\Lambda\succeq0.
$
Since $\phi, \eta, \gamma\geq0$, 
$\Sigma\succeq0$, we have $\phi\Sigma\succeq0$ and
$\eta P_tP_t^\top\succeq0$. It follows that the Kronecker blocks $
    (y_t y_t^\top)\otimes
    \bigl(\phi\Sigma+\eta P_tP_t^\top\bigr)
$ and 
$
    \gamma\,y_Ty_T^\top\otimes I_n
$
are positive semidefinite.
Time integration and expectation preserve positive semidefiniteness.
Since the matrix $A$ is the negative of these positive semidefinite blocks, we obtain $A\preceq0$. Strict concavity holds whenever the penalty contributions are
jointly coercive in $\theta$ (e.g.\ when $\tilde\Lambda\succ 0$ and the
empirical Gram matrix $\E \left[\int_0^T x_tx_t^\top\,\dd t\right]$ has full rank).
\end{proof}

\begin{lemma}[Closed-form optimiser]\label[lemma]{lem:theta_star}
Assume $A$ in \Cref{eq:A_tensor_explicit} is symmetric negative
definite. Then the reduced problem is strictly concave and its unique
unconstrained maximiser is
\begin{equation}\label{eq:theta_star}
\theta^\ast = -\frac12 A^{-1}b.
\end{equation}
\end{lemma}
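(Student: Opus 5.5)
The plan is to treat $J(\theta)=\theta^\top A\theta+b^\top\theta+c$ from \Cref{thm:quadratic_reduction_signature} as an ordinary quadratic function on $\R^{nm}$ and use first- and second-order calculus. By \Cref{eq:A_tensor_explicit}, $A$ is a (negative) expectation of symmetric Kronecker blocks: $x_tx_t^\top\otimes\tilde\Lambda$, $y_ty_t^\top\otimes(\phi\Sigma+\eta P_tP_t^\top)$ and $y_Ty_T^\top\otimes I_n$. Each block is symmetric because a Kronecker product of symmetric matrices is symmetric, and integration and expectation preserve symmetry. Symmetry is therefore already given, and the hypothesis adds $A\prec0$.

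First, strict concavity. For $\theta_1\neq\theta_2$ and $\lambda\in(0,1)$, a direct expansion gives
\begin{equation*}
\lambda J(\theta_1)+(1-\lambda)J(\theta_2)-J(\lambda\theta_1+(1-\lambda)\theta_2)=-\lambda(1-\lambda)(\theta_1-\theta_2)^\top A(\theta_1-\theta_2).
\end{equation*}
This quantity is strictly positive when $A\prec0$, so $J$ is strictly concave. Equivalently, the Hessian $2A$ is negative definite. This strengthens \Cref{lem:concavity_reduced}.

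Second, existence and the formula. Since $A$ is symmetric, the gradient is $\nabla J(\theta)=2A\theta+b$. Negative definiteness implies $A$ is invertible, so the stationarity equation $2A\theta+b=0$ has the unique solution $\theta^\ast=-\tfrac12A^{-1}b$. To show this point is the global maximiser, I would complete the square:
\begin{equation*}
J(\theta)=(\theta-\theta^\ast)^\top A(\theta-\theta^\ast)+c-\tfrac14 b^\top A^{-1}b.
\end{equation*}
The first term is $\le0$, with equality only at $\theta=\theta^\ast$ because $A\prec0$. Hence $\theta^\ast$ is the unique unconstrained maximiser, and the optimal value is $c-\tfrac14 b^\top A^{-1}b$.

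The only step needing care is the completion of the square. It relies on $A$ being symmetric, so that the cross terms combine as $2\theta^\top A\theta^\ast$, and on the identity $A\theta^\ast=-b/2$. Both follow from the Kronecker structure noted above and the definition of $\theta^\ast$, so I expect no real obstacle. Finally, $\theta^\ast$ maps back to the optimal matrix $B^\ast$ by inverting the column-major vectorisation.
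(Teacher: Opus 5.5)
Your proposal is correct and follows the paper's own route: you compute the gradient $2A\theta+b$, solve the stationarity equation, and use negative definiteness of $A$. Your completion of the square also makes explicit the global-optimality and uniqueness step that the paper leaves implicit in ``strict concavity''.

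One slip needs fixing. The convex-combination identity should read $\lambda J(\theta_1)+(1-\lambda)J(\theta_2)-J(\lambda\theta_1+(1-\lambda)\theta_2)=+\lambda(1-\lambda)(\theta_1-\theta_2)^\top A(\theta_1-\theta_2)$, which is strictly \emph{negative} for $A\prec0$, and that is exactly strict concavity. With your minus sign, the quantity being ``strictly positive'' would express strict convexity, so two sign errors cancel in your write-up. Your Hessian remark (negative definite $2A$) already gives a correct justification on its own.
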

\begin{proof}
The gradient is $\nabla_\theta J(\theta)=2A\theta+b$. Setting it to zero
gives \Cref{eq:theta_star}, and negative definiteness of $A$ yields
strict concavity.
\end{proof}

\begin{corollary}[Ridge-regularised optimiser]\label[corollary]{cor:ridge_optimizer}
\begingroup
Let $A\preceq0$ be the curvature matrix in
\Cref{thm:quadratic_reduction_signature} and fix $\rho>0$. The ridge-shifted
objective
\[
J_\rho(\theta)
=
\theta^\top(A-\rho I)\theta+b^\top\theta+c
\]
is strictly concave and has the unique unconstrained maximiser
\[
\theta^\ast_\rho
=
-\frac12(A-\rho I)^{-1}b.
\]
\endgroup
\end{corollary}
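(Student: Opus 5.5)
The plan is to reduce everything to \Cref{lem:theta_star}, applied to the shifted curvature matrix $A_\rho := A-\rho I$ in place of $A$. The whole argument therefore rests on one fact: $A_\rho$ is symmetric negative definite.

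First I check that $A_\rho$ is symmetric. By \Cref{thm:quadratic_reduction_signature}, $A$ is symmetric, and each Kronecker block in \Cref{eq:A_tensor_explicit} is a Kronecker product of symmetric matrices, so this is consistent. Since $\rho I$ is symmetric, $A_\rho$ is symmetric.

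Next I show negative definiteness. By \Cref{lem:concavity_reduced} (or the hypothesis $A\preceq 0$), for every $\theta\neq0$ we have
\[
\theta^\top A_\rho\theta=\theta^\top A\theta-\rho\|\theta\|^2\le-\rho\|\theta\|^2<0 .
\]
Equivalently, every eigenvalue of $A_\rho$ is at most $-\rho<0$. In particular $A_\rho$ is invertible, and $\|A_\rho^{-1}\|\le 1/\rho$ in operator norm. This shows that the inverse in the claimed formula exists even when $A$ itself is singular, for instance when the Gram matrix is rank-deficient.

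With $A_\rho\prec0$ established, I conclude using \Cref{lem:theta_star}. The objective $J_\rho$ has exactly the form $\theta^\top A_\rho\theta+b^\top\theta+c$, so that lemma applies verbatim with $A$ replaced by $A_\rho$. It gives strict concavity and the unique maximiser $\theta^\ast_\rho=-\tfrac12 A_\rho^{-1}b$.

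For completeness I can also argue directly, without invoking the lemma. The gradient is $2A_\rho\theta+b$, and its unique zero is $\theta^\ast_\rho$. Completing the square gives
\[
J_\rho(\theta)=J_\rho(\theta^\ast_\rho)+(\theta-\theta^\ast_\rho)^\top A_\rho(\theta-\theta^\ast_\rho),
\]
and the last term is strictly negative for $\theta\neq\theta^\ast_\rho$. This proves both strict concavity and uniqueness at once.

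No step is a real obstacle. The only point needing care is to note that symmetry of $A$ comes from the theorem, so that the quadratic form identity and the gradient formula $2A_\rho\theta+b$ are valid.
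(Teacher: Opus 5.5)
Your proposal is correct and matches the paper's proof: you deduce $A-\rho I\prec0$ from $A\preceq0$ and $\rho>0$, then apply \Cref{lem:theta_star} to the shifted curvature matrix. The eigenvalue bound and the completing-the-square identity are valid additions, but the corollary does not need them.
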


\begin{proof}
\begingroup
Since $A\preceq0$ and $\rho>0$, $A-\rho I\prec0$. Applying
\Cref{lem:theta_star} to the shifted curvature matrix gives the stated
maximiser.
\endgroup
\end{proof}

\begin{remark}[Strict vs.\ semidefinite curvature]\label{rem:strict_vs_semi}
\Cref{lem:concavity_reduced} only requires $A\preceq 0$, while
\Cref{lem:theta_star} additionally requires $A\prec 0$. The impact block alone
equals $\E\left [\int_0^T x_tx_t^\top\,\dd t\otimes\tilde\Lambda\right]$, so
$A\prec0$ holds whenever the feature Gram matrix has full rank and
$\tilde\Lambda\succ0$. In practice, however, the
empirically assembled $\hat A$ may be numerically
negative semidefinite because the
signature dictionary contains highly collinear coordinates and the dollar
neutrality and terminal penalties act only on the integrated coordinates
$y_t,y_T$. The optimiser is then made well posed by a small Tikhonov shift,
$-\frac12(A-\rho I)^{-1}b$ with $\rho>0$ small, as shown in \Cref{cor:ridge_optimizer}. If $A\preceq0$ but $A\not\prec0$, the unregularised unconstrained
quadratic need not have a finite maximiser: a finite maximiser exists only when
the linear term does not act on null directions of $A$, equivalently
$b\in\operatorname{Range}(-A)$.
The empirical implementation therefore solves the ridge-shifted problem in
\Cref{cor:ridge_optimizer}.
\end{remark}
Hard desk constraints can be imposed after the same reduction, because
they become finite-dimensional constraints on $\theta=\vec B$ once the
feature paths are fixed. Examples include bounds on trading speed, inventory,
gross notional or leverage, hard terminal inventory constraints, and hard or
banded dollar-neutrality constraints. The resulting problem is a constrained
concave quadratic programme; the experiments below use soft penalties instead
of these hard constraints.

Without signature lifting, non-Markovian dependence on the full path of $Z$
leads naturally to dynamic programming on an enlarged, potentially
infinite-dimensional state space.
Under the signature-linear policy class, the optimisation is carried
entirely by the finite-dimensional coefficient vector $\theta$ and the
precomputed moment tensors $A$ and $b$.
The non-Markovian path dependence is absorbed into the feature vector
$x_t$ and its integral $y_t$, not into the state space of a PDE. The path-dependent stochastic control problem has been reduced to a
\emph{static} finite-dimensional optimisation: find $\theta \in \R^{nm}$ that
maximises a concave quadratic.
The matrices $A$ and $b$ are precomputed \emph{off-line} from historical
paths (or from closed-form moment formulas where available), after which
the optimal policy is available via a single matrix solve
$\theta^* = -\frac{1}{2}A^{-1}b$.
During live trading, evaluating the policy requires only the current
signature feature vector $x_t$ and one matrix-vector multiply; no
dynamic programming, no differential equations, and no re-optimisation
are needed at execution time.
For an execution desk, this is the main operational message: model richness is
paid for during calibration, not during live execution.

\subsection{Regularization and Signature Complexity}
\label{subsec:snr_and_regularization}
Universal approximation results
\cite{LyonsCaruanaLevy2007,Chevyrev_2025,zbMATH08195857}
show that linear functionals on signatures are dense in appropriate spaces of
continuous path functionals as $N\to\infty$.
They do not by themselves guarantee good finite-sample calibration at any fixed
truncation level.
In practice, however, the dimension $m$ of the feature space grows as
$\sum_{k=0}^{N}d_z^k$ and financial data is noisy.
Increasing $N$ therefore adds theoretical signal (by capturing higher-order
geometric effects such as lead--lag curvatures) but simultaneously inflates
the variance of the empirical moment matrices.
\paragraph{Spectral view.}
In practice the exact matrices $A$ and $b$ are replaced by empirical
estimators $\hat A$ and $\hat b$.
Denoting the eigendecomposition of $-\hat A$ by eigenvalues $\lambda_i > 0$
and eigenvectors $v_i$, the estimated optimiser reads
\[
\hat\theta^\ast = \tfrac{1}{2}\sum_{i=1}^{nm}\frac{v_i^\top \hat b}{\lambda_i}\,v_i.
\]
The factor $1/\lambda_i$ amplifies estimation noise in $\hat b$.
For a higher-order signature component that contributes little independent
structural variance ($\lambda_i$ small), this amplification can generate
large, oscillating coefficients and excessive inventory turnover out of
sample.
\paragraph{Tikhonov regularization.}
A practical remedy is to add a ridge penalty $-\rho\|\theta\|^2$ to
the objective, as formalised in \Cref{cor:ridge_optimizer}, yielding
\begin{align*}
J_{\mathrm{ridge}}(\theta)
&=
\theta^\top(A-\rho I)\theta + b^\top\theta + c,\\
\theta^\ast_{\mathrm{ridge}}
&=
-\tfrac{1}{2}(A-\rho I)^{-1}b.
\end{align*}
The quadratic structure is preserved and the noise amplifier for the $i$-th
component becomes $1/(\lambda_i+\rho)$. Choosing $\rho$ on the scale of the near-null eigenvalues caps this
amplification by replacing $1/\lambda_i$ with $1/(\lambda_i+\rho)$, so it is not a hard threshold.
\paragraph{Lasso penalty for feature selection.}
To identify the minimal set of predictive signature coordinates, one can
instead apply an $L_1$ penalty:
$$
J_{\mathrm{lasso}}(\theta)
=
\theta^\top A\theta + b^\top\theta + c - \tau\|\theta\|_1.
$$
The resulting problem is concave, though not a quadratic programme in the
classical smooth sense because $\|\cdot\|_1$ is non-differentiable.
It is strictly concave only on directions where the quadratic part is strictly
negative.
Equivalently, minimising $-J_{\mathrm{lasso}}$ is a convex problem solvable by
standard solvers via, e.g., a second-order cone reformulation.
The $L_1$ penalty drives noise-dominated coefficients exactly to zero,
performing automated, data-driven depth truncation without fixing $N$
in advance.

\subsection{Connection to Classical Models}
\label{sec:connection_classical}

The framework contains classical deterministic execution schedules as
degenerate feature classes.
This is important conceptually: the signature approach should be viewed as an
extension of the usual execution toolbox, not as a competing formulation that
starts from unrelated assumptions.

\begin{proposition}[Time-only reduction to deterministic execution classes]\label{prop:recovery_ac_lq}
Consider a single asset ($n=1$).
If the feature map is restricted to time-only coordinates
$x_t=(1,t,\dots,t^N/N!)$, the reduced problem is a deterministic quadratic
programme over time-polynomial trading speeds. This is the same
linear-quadratic liquidation structure as Almgren--Chriss, restricted to the
chosen polynomial schedule class; increasing $N$ enlarges this deterministic
class and can approximate the continuous Almgren--Chriss schedule.
\end{proposition}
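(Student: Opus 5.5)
The plan is to specialise \Cref{thm:quadratic_reduction_signature} to the deterministic feature vector $x_t=(1,t,\dots,t^N/N!)$. This vector is exactly the truncated signature of the one-dimensional time channel $Z_t=t$, since $S^{(1,\dots,1)}(Z)_{0,t}=t^k/k!$. I would then compare the resulting programme with the Almgren--Chriss problem posed over deterministic schedules in $L^2[0,T]$. There are three steps: show that the reduced coefficients are deterministic and explicit; identify the reduced objective with the Almgren--Chriss functional restricted to polynomial speeds; and prove convergence as $N\to\infty$.

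\emph{Step 1 (explicit deterministic programme).} With $n=1$ and time-only features, $x_t$ and $y_t=\int_0^t x_u\,\D u=(t,t^2/2,\dots,t^{N+1}/(N+1)!)$ are deterministic. Hence $v_t=\theta^\top x_t$ is a polynomial of degree at most $N$, $Q_t=Q_0+\theta^\top y_t$ is deterministic, and $\alpha_t=Kx_t$ is deterministic. Substituting into \eqref{eq:A_tensor_explicit}--\eqref{eq:c_tensor_explicit}, the Kronecker products collapse to scalar multiples, and the only stochastic input is the second moment $p_2(t):=\E[P_t^2]$, which is finite by \Cref{ass:framework_moments}. This gives
\[
A=-\Bigl[\tilde\Lambda\!\int_0^T\! x_tx_t^\top\D t+\int_0^T\!(\phi\Sigma+\eta\,p_2(t))\,y_ty_t^\top\D t+\gamma\,y_Ty_T^\top\Bigr],
\]
with analogous expressions for $b$ and $c$. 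The Gram matrix of $\{t^k/k!\}_{k\le N}$ on $[0,T]$ is a scaled Hilbert matrix and is therefore positive definite. With $\tilde\Lambda>0$ this gives $A\prec0$, so \Cref{lem:theta_star} yields a unique maximiser $\theta^\ast_N$.

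\emph{Step 2 (identification with Almgren--Chriss).} Define, for deterministic $v\in L^2[0,T]$ with $Q_t=Q_0+\int_0^t v$,
\[
\mathcal J(v)=\int_0^T\bigl(Q_t\alpha_t-\tilde\Lambda v_t^2-(\phi\Sigma+\eta p_2(t))Q_t^2\bigr)\D t-\gamma Q_T^2 .
\]
This is the linear-quadratic liquidation functional: temporary impact plus a quadratic inventory risk with time-varying weight, together with a soft terminal penalty. The hard constraint $Q_T=0$ is recovered either as the limit $\gamma\to\infty$ or by imposing the linear constraint $y_T^\top\theta=-Q_0$. Step 1 shows that $J(\theta)=\mathcal J(\theta^\top x_\cdot)$, so the reduced problem is $\mathcal J$ restricted to the subspace $\mathcal P_N$ of polynomials of degree at most $N$.

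\emph{Step 3 (approximation as $N\to\infty$).} This is the main obstacle, because the statement only asserts approximation informally. I would argue as follows.
\begin{enumerate}
\item The map $v\mapsto Q$ is bounded from $L^2$ into $C[0,T]$, so $\mathcal J$ is continuous on $L^2$.
\item $\mathcal J$ is strongly concave, with $\mathcal J(v^\ast)-\mathcal J(v)\ge\tilde\Lambda\|v-v^\ast\|_{L^2}^2$, where $v^\ast$ is the unique $L^2$ maximiser (the continuous Almgren--Chriss schedule). This follows because $v^\ast$ is a critical point and the quadratic part of $-\mathcal J$ dominates $\tilde\Lambda\|v\|^2$.
\item Polynomials are dense in $L^2[0,T]$. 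Taking $p_N\in\mathcal P_N$ with $p_N\to v^\ast$ and using the optimality of $\theta^\ast_N$ over $\mathcal P_N$, we obtain $\mathcal J(v^\ast)\ge\mathcal J(\theta^{\ast\top}_N x)\ge\mathcal J(p_N)\to\mathcal J(v^\ast)$.
\item Strong concavity then gives $\|\theta^{\ast\top}_N x-v^\ast\|_{L^2}\to0$, and the optimal values are monotone in $N$ because the subspaces $\mathcal P_N$ are nested.
\end{enumerate}
If one also wants to say that deterministic $v^\ast$ is optimal among all adapted controls in the classical setting, I would cite the standard Almgren--Chriss argument rather than reprove it.
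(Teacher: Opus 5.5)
Your proposal is correct, and it is a more complete argument than the paper's, which is only a sketch. The paper sets $Z_t=(t)$ and notes that $x_t$ and $y_t$ are deterministic polynomial vectors, so the optimiser is a polynomial schedule $\sum_k\beta_kt^k$. It then asserts that all moment tensors reduce to time integrals of monomials, so that $J$ becomes the deterministic Almgren--Chriss objective. It states without argument that increasing $N$ enlarges the class enough to approximate the Almgren--Chriss profile, with the hard constraint represented only through the soft penalty.

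Your Step 1 is also more general on one point. The paper's ``monomial integrals'' remark holds only when prices are themselves functions of time (as under the literal reading $Z_t=(t)$) or when $\eta=0$. You keep prices stochastic while restricting only the features to time, and you correctly pick up the weight $\eta\,p_2(t)=\eta\,\E[P_t^2]$. The limiting problem is then an Almgren--Chriss-type problem with time-varying inventory weight $\phi\Sigma+\eta p_2(t)$; the classical constant-weight case corresponds to $\eta=0$. Your Hilbert-matrix remark also gives $A\prec0$ and a unique $\theta_N^\ast$, which the paper does not state.

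The substantive difference is Step 3. You actually prove the approximation claim using three ingredients: density of polynomials in $L^2[0,T]$, nestedness of the $\mathcal P_N$, and the strong-concavity bound $\mathcal J(v^\ast)-\mathcal J(v)\ge\tilde\Lambda\|v-v^\ast\|_{L^2}^2$. This gives monotone convergence of the optimal values and $L^2$ convergence of the schedules, turning the paper's informal ``can approximate'' into a precise statement.

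Two small additions would make it airtight:
\begin{enumerate}
\item Justify existence of $v^\ast$. This follows from Lax--Milgram, since the quadratic part is a bounded symmetric form bounded below by $\tilde\Lambda\|v\|_{L^2}^2$ and the linear part is bounded.
\item Add an integrability hypothesis such as $p_2\in L^1[0,T]$, which you need for continuity of $\mathcal J$ on all of $L^2$. In the time-only setting the moment assumption only yields $\int_0^T t^2p_2(t)\,\dd t<\infty$. That said, $p_2\in L^1$ is already implicitly required for $c$ to be finite when $Q_0\neq0$.
\end{enumerate}
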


\begin{proof}
Restrict to $Z_t=(t)$ (time only) so that $x_t=(1,t,\dots,t^N/N!)$.
The feature process is deterministic and $y_t=\int_0^t x_u\,\dd u$ is a
vector of time polynomials.
The optimal trading speed is therefore
$$
v_t^* = \sum_{k=0}^N \beta_k t^k,
$$
a polynomial schedule in $t$.
At $N=1$ this gives an affine schedule $v_t^* = \beta_0 + \beta_1 t$; increasing
$N$ enlarges the deterministic schedule class used to approximate, rather than
identically recover at finite $N$, the corresponding Almgren--Chriss liquidation profile
\cite{AlmgrenChriss2000OptimalExecution}. The classical hard liquidation constraint $Q_T=-Q_0$ is represented
here only through the soft terminal penalty unless an explicit linear equality
constraint is added.
All moment tensors $A$ and $b$ reduce to time integrals of monomials,
so $J$ becomes the classical deterministic Almgren--Chriss objective.
\end{proof}

\begin{remark}[Relation to feedback LQ control]
For practitioners, the Almgren--Chriss case corresponds to a policy that only
reacts to time, or to time and remaining inventory in the usual deterministic
liquidation formulation.
The signature framework enlarges this policy class by allowing the trading rule
to react to richer summaries of the realised path, while preserving the same
quadratic-programme structure.
For control theorists, Riccati LQ control remains the natural formulation when
the trading rule is an affine feedback of the endogenous inventory state,
for example $v_t=u_0(t)+u_1(t)Q_t$. Such feedback laws are closely related to
the present quadratic structure, but they are not covered by the exogenous-path
reduction theorem without adding an endogenous-state extension. The difference
is architectural: LQ control optimises feedback laws through a state equation,
whereas the signature formulation fixes an exogenous path-feature dictionary
and optimises the coefficients of a static feature map.
\end{remark}

\begin{remark}[Computational comparison to LQ control]
Classical stochastic LQ control optimises over \emph{adapted} controls and
proceeds through a Riccati ordinary differential equation that must be solved
backward in time.
The signature approach instead fixes a pathwise feature class, precomputes
the moment tensors $A$ and $b$ once from data or closed-form formulas, and
then reads off $\theta^*$ from a single matrix inversion.
There is no backward integration and no requirement to specify a transition
density for the state.
The trade-off is that the policy class is restricted to signature-linear
functions. Signature universality gives density in suitable continuous
path-functional classes as the truncation depth increases, but the finite-$N$
problem remains a restricted policy optimisation.

\end{remark}

\section{Trading Experiments and Workflow Illustration}\label{sec:synthetic_historical_results}
This section applies the reduced execution framework to a controlled synthetic
mean-reversion experiment and to a historical workflow example. The
synthetic data-generating process uses two log prices that share a stochastic
market component while their log spread follows an Ornstein--Uhlenbeck process.
This experiment provides the main benchmark for studying how the signature
execution layer trades a fixed $z$-score alpha signal under trading costs,
terminal inventory pressure, and dollar-neutrality penalties. The historical
example applies the same workflow to an equity pair and illustrates how the
method behaves on a real relative-value trading problem. 

\subsection{Data-Generating Process}\label{subsec:application_dgp}
We simulate two assets whose unaffected log prices satisfy, for
$t\in[0,1]$,
\begin{align}
    \D \log{P_t^{(1)}} = \D M_t + \frac{1}{2} \D X_t \\
    \D \log{P_t^{(2)}} = \D M_t - \frac{1}{2} \D X_t
\end{align}
where
\begin{align}
    \D M_t = \mu \D t + \sigma_M \D W^M_t \\
    \D X_t = -\kappa X_t \D t + \sigma_X \D W^X_t
\end{align}
with $\D W^M_t,\D W^X_t$ standard Wiener processes, correlated
with parameter $\rho$. Thus $M_t$ is an
arithmetic Brownian motion with drift $\mu$ and volatility $\sigma_M$, while
$X_t$ is an OU process with mean-reversion speed $\kappa$ and volatility
$\sigma_X$. 

The common component lets both price levels wander, while the
relative-value signal remains tied to a stationary log spread. Indeed, the
spread between the two log prices follows
\begin{align}
    \D S_t &= \D \log{P_t^{(1)}} - \D \log{P_t^{(2)}}\\
    &= \D X_t
\end{align}
so $S_t=X_t$ up to its initial value.

\subsection{Standard Z-Score Benchmark and Updated Trading Costs}\label{subsec:standard_z_score_benchmark}
As a simple benchmark, we consider a standard $z$-score pairs-trading rule
\cite{GatevGoetzmannRouwenhorst2006Pairs,AvellanedaLee2010StatArb,Vidyamurthy2004PairsTrading}.
Let
\begin{equation}
    z_t := \frac{S_t-\mu_t}{\sigma_t}
\end{equation}
denote the usual normalised spread signal. Given an entry threshold $z_{\mathrm{e}}=2$,
the rule opens a long--short position when $|z_t|>z_{\mathrm{e}}$: if $z_t<-z_{\mathrm{e}}$, it
buys asset $1$ and sells asset $2$; if $z_t>z_{\mathrm{e}}$, it sells asset $1$
and buys asset $2$. Open trades are closed when the spread signal reverts to
the prescribed exit region, here represented by a crossing of $z_t=0$. Positions are capped at a fixed maximum investment amount (in currency value).

For both the benchmark and the signature strategy, ex-post
mark-to-market wealth is reported after a fixed proportional spread cost. A
trade in asset $i$ at speed $v_t^{(i)}$ is charged the execution price
\begin{equation}
    \tilde P_t^{(i)}
    =
    P_t^{(i)}\bigl(1+\xi\,\operatorname{sign}(v_t^{(i)})\bigr),
\end{equation}
 with $\xi>0$. We set $\xi=0.5\cdot10^{-4}$, corresponding to
half a basis point per unit of traded notional. The reduced objective itself
continues to use the quadratic temporary-impact term in
\Cref{eq:objective_framework}; the proportional spread cost is an accounting
metric used simply for more realistic benchmark comparisons. 

\begin{remark}
To compare the signature-based strategy with the $z$-score benchmark, we report the return on turnover (ROT) of each strategy. For a strategy with final cumulative profit $(PnL)_T$ and cumulative traded notional
\begin{equation}
\mathbf T\coloneqq \sum_{k} \sum_{i}
\bigl| \Delta Q_{t_k}^{i} \bigr| P_{t_k}^{i},
\end{equation}
we define 
\begin{equation}
\text{ROT}
\coloneqq
\frac{(PnL)_T}{\mathbf T}.
\end{equation}
Thus, return on turnover measures the profit generated per unit of traded notional and can be a useful metric to compare strategies that trade different amounts.

\end{remark}

\subsection{Large-Run Synthetic Illustration}\label{subsec:synthetic_results}
We consider two assets following \Cref{subsec:application_dgp}.
The alpha signal is the simple $z$-score specification
$\alpha_t=c_\alpha(-z_t,z_t/\beta)^\top$, with $c_{\alpha} =1.5$, described in
\Cref{example:z_score}. The displayed
large-run experiment estimates the
signature moments from $10000$ training paths and evaluates the fitted
coefficients on $5000$ test paths. Its process parameters are
$T=1$, $P^{(1)}_0=1$, $P^{(2)}_0=1$, $\mu=0$,
$\sigma_M=0.02$, $\sigma_X=0.02$, $\kappa=50$, and $\rho=0.3$.
The execution parameters are $N=2$, $Q_0=(0,0)^\top$,
$\tilde{\Lambda}=\operatorname{diag}(10^{-4},10^{-5})$,
$\eta=10^{-1}$, $\phi=0$, $\gamma=0.1$,$\beta=1$ and $\lambda_{\text{ridge}} =10^{-8}$.
Trading speeds are evaluated on $1000$ execution events over a $5000$-step
path grid. The rolling mean and standard deviation of the log-spread signal are
estimated over a warm-up window of $600$ grid steps.
This construction gives a controlled environment in which the
spread is genuinely mean reverting and the benchmark rule of
\Cref{subsec:standard_z_score_benchmark} is well specified.

The results are shown in \Cref{fig:synthetic_results_10k_train_5k_test}. 
The two middle panels illustrate the main difference between the two approaches. The classical $z$-score benchmark\footnote{For this simulation we used a benchmark maximum investment amount of 1000.} enters and exits positions according to fixed thresholds, whereas the signature strategy produces a continuous execution policy. Trading speeds are largest when the signal is active and then decrease as the terminal horizon approaches. The resulting price-scaled inventory paths
are approximately balanced across the two assets, which
indicates that the fitted speed respects the dollar-
neutrality penalty. The terminal penalty also has the intended effect: average inventory exposure is gradually reduced toward the end of the trading window.

The bottom panel compares cumulative mark-to-market PnL after
proportional execution costs. In this simulation, the signature strategy achieves a
higher average return on turnover than the threshold benchmark, approximately
$9$ bps versus $6$ bps.

\begin{figure*}[!t]
    \centering
    \includegraphics[width=\linewidth]{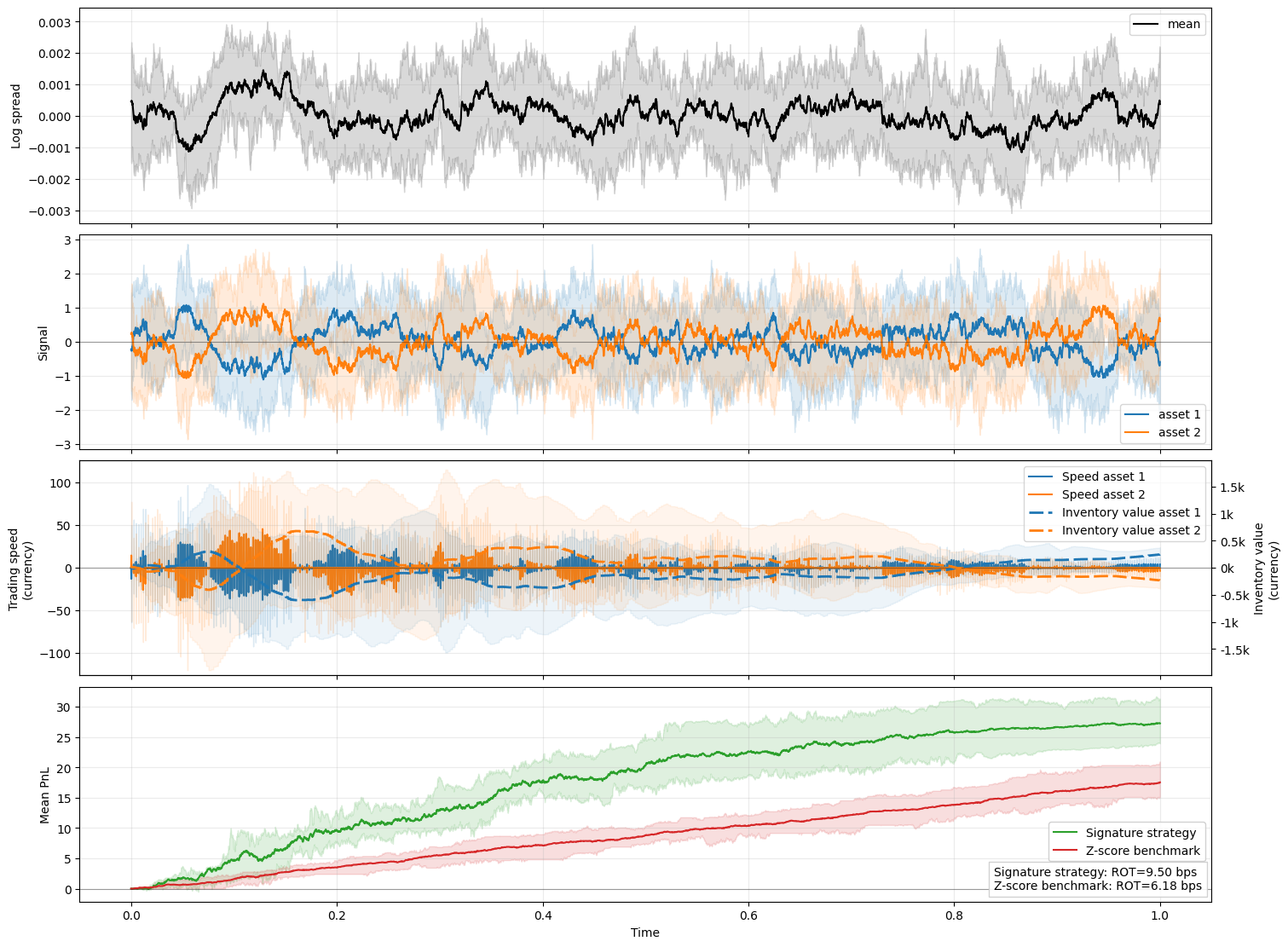}
    \caption{Synthetic benchmark against a classical
    $z$-score pairs-trading rule. The two assets follow the common-trend
    log-spread model of \Cref{subsec:application_dgp}. We display the log
    spread, the $z$-score signal, price-scaled trading speed and inventory
    paths, and cumulative PnL for the signature strategy and the benchmark of
    \Cref{subsec:standard_z_score_benchmark}. The plotted values are averages
    over $5000$ test paths after a $600$-step warm-up window used to
    estimate the rolling $z$-score; shaded regions indicate one standard
    deviation around the mean. The signature strategy uses the same $z$-score
    alpha as the benchmark but learns a continuous path-dependent execution
    rule: in this illustration
    return on turnover is higher than that of the benchmark.}
    \label{fig:synthetic_results_10k_train_5k_test}
\end{figure*}

\subsection{Historical Workflow Illustration}\label{subsec:historical_backtest}

We apply the same calibration and execution workflow to Shell
PLC\footnote{Ticker: SHEL.} and BP
PLC\footnote{Ticker: BP.}, two London Stock Exchange equities in
the energy sector for which a mean-reverting spread is economically plausible.
The empirical signature is estimated from four-day trading windows between
January 2025 and October 2025, and the fitted policy is evaluated on November
and December 2025. The execution objective uses the following parameters: $\tilde\Lambda=\operatorname{diag}(10^{-1},10^{-2})$,
$\eta=10^{-2}$, $\phi=0$, $\gamma=1$, $c_\alpha=1$,
$\beta=1$, $N=2$, and an unregularised solve ($\lambda_{\mathrm{ridge}}=0$); the
$z$-score uses an $8$-hour rolling window. Estimating the expected signature from
fixed-length past windows treats those windows as draws from a single stationary
path law, so regime drift between the training and test periods is an additional
source of out-of-sample error here. The results are displayed in
\Cref{fig:shell_bp_results}.

The top panel displays a non-stationary but visibly reverting log
spread over this test window: the spread declines during the first part of the
window, stabilises around the middle, and then rises sharply in December. The second panel shows the corresponding trading signal. The $z$-score is computed using an $8$-hour rolling window, and it can be seen how it alternates between large positive and negative regimes.

The third panel again shows the main difference between the signature solution and the benchmark. The signature strategy continuously adjusts trading speed as a function of the realised path features. This produces large but symmetrical  inventory exposures in the two assets when the signal is strong, followed by reductions in exposure when the signal weakens or the terminal horizon approaches. The policy is therefore consistent with the intended dollar-neutral execution behaviour in this window.

Finally, the bottom panel reports cumulative mark-to-market PnL.
For this particular trading window, the fitted signature policy has a significantly 
higher return on turnover,
approximately $9$ bps versus $2$ bps for the benchmark\footnote{For
this experiment we used a benchmark maximum investment amount of $500 \ 000$
British pounds (GBP) per trade.}. Given the same $z$-score signal, the quadratic model produces an adaptive execution rule that varies trade intensity and manages inventory as the realised path evolves. Furthermore, in this test window, the signature policy appears more robust than the benchmark during periods in which the signal does not consistently translate into realised returns. This is visible in November, where the $z$-score strategy records negative performance while the signature-based strategy maintains positive accounting PnL.

We stress again that our framework is general and these experiments use a rather simple signal. Further research can test the framework with richer proxies for expected returns. Natural extensions include
employing richer alpha proxies such as order-flow
imbalance features \cite{Kolm2023DeepOFI}, deep-learning microstructure
predictors \cite{SirignanoCont2019UniversalFeatures}, or residual-factor
signals in the spirit of \cite{AvellanedaLee2010StatArb}.

In conclusion, the purpose of this experiment is not to provide a systematic historical
performance claim, but to show how the signature-based execution framework can
be connected to a real equity pairs-trading problem. A full empirical validation
would require a larger universe of pairs, multiple market regimes, and a more detailed implementation of exchange fees,
taxes, borrow costs, and liquidity constraints.

\begin{figure*}[!t]
    \centering
    \includegraphics[width=\linewidth]{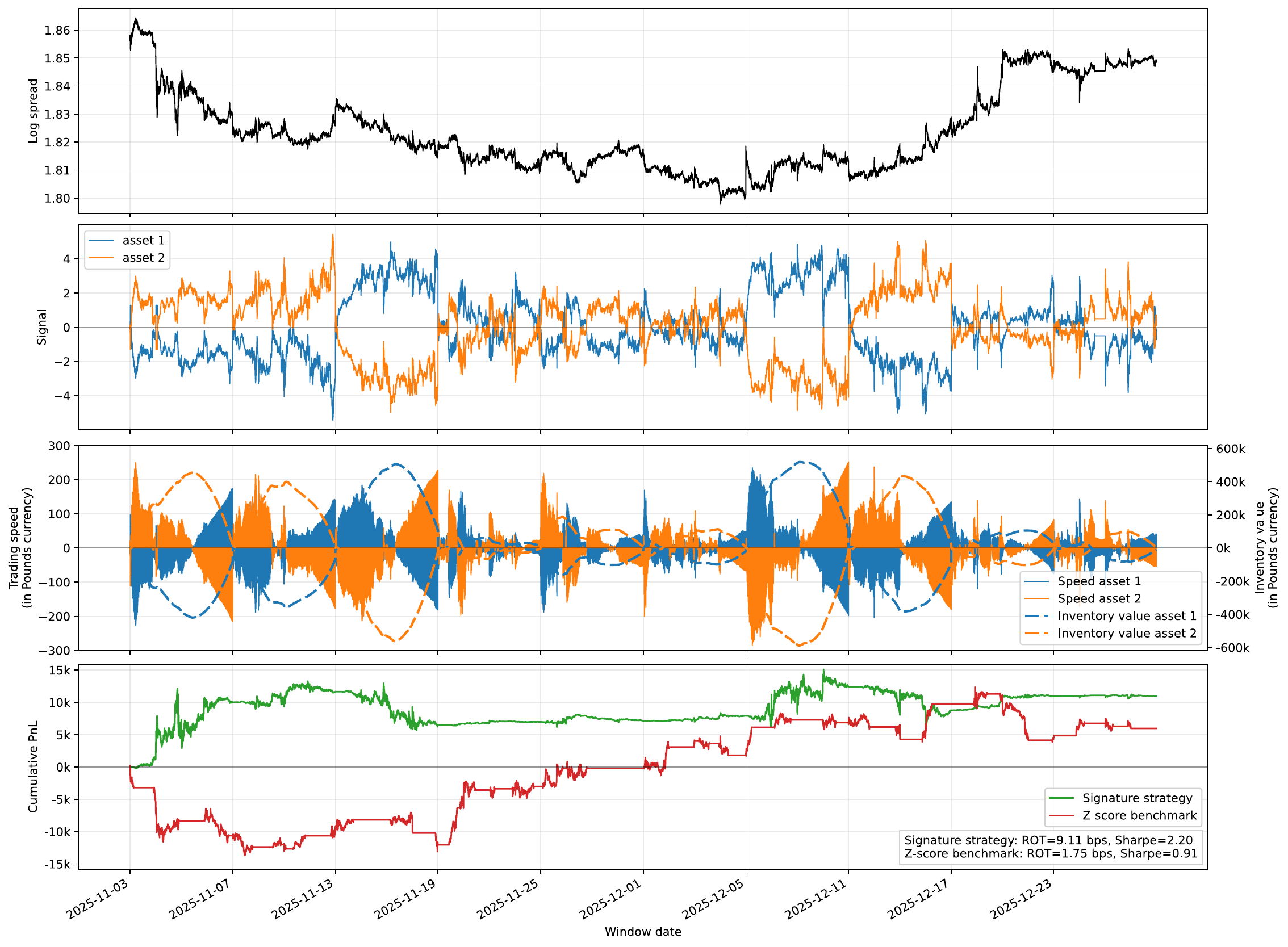}
    \caption{The historical Shell--BP deployment
    backtest on a single out-of-sample test window. The
    panels show the log spread, the $z$-score signal, price-scaled trading and
    inventory exposure in the two legs, and cumulative PnL of the signature
    strategy against the $z$-score benchmark of
    \Cref{subsec:standard_z_score_benchmark}. The signature policy is calibrated on past fixed-length trading windows and then applied on the test window. The figure illustrates how the reduced quadratic programme produces a continuous path-dependent execution rule from the same z-score alpha used by the benchmark. In this test window, the signature-based strategy outperforms the benchmark in accounting terms, as reflected by its higher return on turnover.}
    \label{fig:shell_bp_results}
\end{figure*}

\section{Conclusion}
\label{sec:conclusion}

This paper develops a signature-based framework for
signal-aware execution in statistical arbitrage. The central idea is to model
the predictive signal and the execution rule on the same path-feature space.
Path signatures provide a systematic feature map for sequential data while
preserving the algebraic structure needed to keep the execution problem
finite-dimensional.

 In the proposed framework, an exogenous information path $Z$ is
represented by its truncated signature coordinate vector $x_t$. Both the alpha
signal and the trading speed are linear functionals of this same state,
$\alpha_t=Kx_t$ and $v_t=Bx_t$. The signal matrix $K$ is fixed during
execution, while the execution matrix $B$ is optimised. Temporary impact,
inventory risk, dollar-neutrality, and terminal liquidation therefore remain
inside one finite-dimensional optimisation problem.

 The main theoretical contribution is the quadratic reduction in
\Cref{thm:quadratic_reduction_signature}. For
$\theta=\operatorname{vec}(B)$, the proposed objective restricted to
signature-linear controls can be expressed as a deterministic quadratic form in the trading speed, from which the solution is readily obtained.  During live execution, the policy is updated only through the signature features of the observed information path \Cref{eq:state_path_framework}, from which the trading speed follows by a simple matrix multiplication.
We applied the model to a synthetic simulation and to a real-data pairs-trading workflow. In both cases, the signature-based strategy achieved a higher return on turnover than the benchmark.

\paragraph{Limitations.}
The framework as presented has the following structural restrictions.
The policy class is signature-\emph{linear}: nonlinear-in-features policies
(e.g.\ neural networks over signature inputs) are outside the current quadratic
reduction, though they can be handled by approximate methods.
Execution costs are modelled as quadratic (temporary impact only); permanent
impact or order-book dynamics require separate treatment.
The Shell--BP example should not be read as a systematic empirical
validation: it uses one pair, one test window, one simple signal, and simplified
trading frictions. It also omits short-sale constraints, borrow costs, funding,
leverage caps, order-size limits, exchange fees, taxes, and latency. 

\paragraph{Future directions.}
Natural extensions include: (i) online learning of the signature policy using
rolling windows with regime-change detection; (ii) multi-asset generalisation
beyond the two-asset case studied here; (iii) depth-adaptive truncation via
$L_1$ penalisation (\Cref{subsec:snr_and_regularization}) as an alternative
to fixing $N$ in advance; (iv) hybrid analytic--empirical Gram blocks that add
area coordinates to the OU-projected basis, extending the projected-policy
calibration to the full level-two state; (v) integration with limit-order-book
microstructure models to handle permanent impact and adverse selection; and
(vi) extension to non-Markovian market impact models using the rough-path
structure of the signature.

\bibliographystyle{elsarticle-num}
\bibliography{references}

\begin{thebibliography}{10}
\expandafter\ifx\csname url\endcsname\relax
  \def\url#1{\texttt{#1}}\fi
\expandafter\ifx\csname urlprefix\endcsname\relax\def\urlprefix{URL }\fi
\expandafter\ifx\csname href\endcsname\relax
  \def\href#1#2{#2} \def\path#1{#1}\fi

\bibitem{GatevGoetzmannRouwenhorst2006Pairs}
E.~Gatev, W.~N. Goetzmann, K.~G. Rouwenhorst, Pairs trading: Performance of a relative-value arbitrage rule, The Review of Financial Studies 19~(3) (2006) 797--827.
\newblock \href {http://dx.doi.org/10.1093/rfs/hhj020} {\path{doi:10.1093/rfs/hhj020}}.

\bibitem{AvellanedaLee2010StatArb}
M.~Avellaneda, J.-H. Lee, Statistical arbitrage in the {US} equities market, Quantitative Finance 10~(7) (2010) 761--782.
\newblock \href {http://dx.doi.org/10.1080/14697680903124632} {\path{doi:10.1080/14697680903124632}}.

\bibitem{AlmgrenChriss2000OptimalExecution}
R.~Almgren, N.~Chriss, Optimal execution of portfolio transactions, Journal of Risk 3~(2) (2000) 5--39.
\newblock \href {http://dx.doi.org/10.21314/JOR.2001.041} {\path{doi:10.21314/JOR.2001.041}}.

\bibitem{Cartea2015AlgorithmicTrading}
{\'A}.~Cartea, S.~Jaimungal, J.~Penalva, Algorithmic and High-Frequency Trading, Cambridge University Press, 2015.

\bibitem{LorenzSchied2013TransientImpact}
C.~Lorenz, A.~Schied, Drift dependence of optimal trade execution strategies under transient price impact, Finance and Stochastics 17~(4) (2013) 743--770.
\newblock \href {http://dx.doi.org/10.1007/s00780-013-0211-x} {\path{doi:10.1007/s00780-013-0211-x}}.

\bibitem{Curato0201201}
G.~Curato, J.~Gatheral, F.~Lillo, Optimal execution with non-linear transient market impact, Quantitative Finance 17~(1) (2017) 41--54.
\newblock \href {http://dx.doi.org/10.1080/14697688.2016.1181274} {\path{doi:10.1080/14697688.2016.1181274}}.

\bibitem{Kolm2023DeepOFI}
P.~N. Kolm, J.~Turiel, N.~Westray, Deep order flow imbalance: Extracting alpha at multiple horizons from the limit order book, Mathematical Finance 33~(4) (2023) 1044--1081.
\newblock \href {http://dx.doi.org/10.1111/mafi.12413} {\path{doi:10.1111/mafi.12413}}.

\bibitem{GyurkoLyons2014ExtractingSignature}
L.~G. Gyurk{\'o}, T.~Lyons, M.~Kontkowski, J.~Field, Extracting information from the signature of a financial data stream, arXiv preprint arXiv:1307.7244\href {http://arxiv.org/abs/1307.7244} {\path{arXiv:1307.7244}}.

\bibitem{optimal_execution}
J.~Kalsi, T.~Lyons, I.~Perez~Arribas, Optimal execution with rough path signatures, SIAM Journal on Financial Mathematics 11~(2) (2020) 470--493.
\newblock \href {http://dx.doi.org/10.1137/19M1259778} {\path{doi:10.1137/19M1259778}}.

\bibitem{double_execution_signatures}
{\'A}.~Cartea, I.~Perez~Arribas, L.~S{\'a}nchez-Betancourt, Double-execution strategies using path signatures, SIAM Journal on Financial Mathematics 13~(4) (2022) 1379--1417.
\newblock \href {http://dx.doi.org/10.1137/21M1456467} {\path{doi:10.1137/21M1456467}}.

\bibitem{futter2023signaturetradingpathdependentextension}
O.~Futter, B.~Horvath, M.~Wiese, \href{https://arxiv.org/abs/2308.15135}{Signature trading: A path-dependent extension of the mean-variance framework with exogenous signals} (2023).
\newblock \href {http://arxiv.org/abs/2308.15135} {\path{arXiv:2308.15135}}.
\newline\urlprefix\url{https://arxiv.org/abs/2308.15135}

\bibitem{Buehler2020MarketGeneratorSignatures}
H.~Buehler, B.~Horvath, T.~Lyons, I.~Perez~Arribas, B.~Wood, Generating financial markets with signatures, SSRN Electronic Journal\href {http://dx.doi.org/10.2139/ssrn.3657366} {\path{doi:10.2139/ssrn.3657366}}.

\bibitem{Kidger2020NeuralCDE}
P.~Kidger, J.~Morrill, J.~Foster, T.~Lyons, \href{https://proceedings.neurips.cc/paper_files/paper/2020/file/4a5876b450b45371f6cfe5047ac8cd45-Paper.pdf}{Neural controlled differential equations for irregular time series}, in: Advances in Neural Information Processing Systems, Vol.~33, 2020, pp. 6696--6707.
\newline\urlprefix\url{https://proceedings.neurips.cc/paper_files/paper/2020/file/4a5876b450b45371f6cfe5047ac8cd45-Paper.pdf}

\bibitem{3454287.3454566}
P.~Bonnier, P.~Kidger, I.~P. Arribas, C.~Salvi, T.~Lyons, Deep signature transforms, Curran Associates Inc., Red Hook, NY, USA, 2019.

\bibitem{Lu2024SignatureKernelMMD}
C.-I. Lu, J.~Sester, Generative model for financial time series trained with mmd using a signature kernel, arXiv preprint arXiv:2407.19848\href {http://arxiv.org/abs/2407.19848} {\path{arXiv:2407.19848}}.

\bibitem{Manten2022SignatureKernelCI}
G.~Manten, C.~Casolo, E.~Ferrucci, S.~W. Mogensen, C.~Salvi, N.~Kilbertus, \href{https://arxiv.org/abs/2402.18477}{Signature kernel conditional independence tests in causal discovery for stochastic processes} (2025).
\newblock \href {http://arxiv.org/abs/2402.18477} {\path{arXiv:2402.18477}}.
\newline\urlprefix\url{https://arxiv.org/abs/2402.18477}

\bibitem{LyonsCaruanaLevy2007}
T.~J. Lyons, M.~Caruana, T.~L{\'e}vy, Differential Equations Driven by Rough Paths, Vol. 1908 of Lecture Notes in Mathematics, Springer, 2007.
\newblock \href {http://dx.doi.org/10.1007/978-3-540-71285-5} {\path{doi:10.1007/978-3-540-71285-5}}.

\bibitem{Lyons1998}
T.~Lyons, Differential equations driven by rough signals, Revista Matem{\'a}tica Iberoamericana 14~(2) (1998) 215--310.

\bibitem{Chevyrev_2025}
I.~Chevyrev, A.~Kormilitzin, \href{http://dx.doi.org/10.1007/978-3-031-97239-3_1}{A Primer on the Signature Method in Machine Learning}, Springer Nature Switzerland, 2025, p. 3–64.
\newblock \href {http://dx.doi.org/10.1007/978-3-031-97239-3_1} {\path{doi:10.1007/978-3-031-97239-3_1}}.
\newline\urlprefix\url{http://dx.doi.org/10.1007/978-3-031-97239-3_1}

\bibitem{LevinLyonsNi2013LearningFromPast}
D.~Levin, T.~Lyons, H.~Ni, Learning from the past, predicting the statistics for the future, learning an evolving system, arXiv preprint arXiv:1309.0260\href {http://arxiv.org/abs/1309.0260} {\path{arXiv:1309.0260}}.

\bibitem{Fermanian2021Embedding}
A.~Fermanian, Embedding and learning with signatures, Computational Statistics \& Data Analysis 157 (2021) 107148.
\newblock \href {http://dx.doi.org/10.1016/j.csda.2020.107148} {\path{doi:10.1016/j.csda.2020.107148}}.

\bibitem{Krauss2017StatArbSurvey}
C.~Krauss, Statistical arbitrage pairs trading strategies: Review and outlook, Journal of Economic Surveys 31~(2) (2017) 513--545.
\newblock \href {http://dx.doi.org/10.1111/joes.12153} {\path{doi:10.1111/joes.12153}}.

\bibitem{CaldeiraMoura2013Pairs}
J.~F. Caldeira, G.~V. Moura, Selection of a portfolio of pairs based on cointegration: A statistical arbitrage strategy, SSRN Electronic Journal\href {http://dx.doi.org/10.2139/ssrn.2196391} {\path{doi:10.2139/ssrn.2196391}}.

\bibitem{segmented_levy_area}
Z.~Guo, H.~Jin, J.~Kuang, Z.~Qian, J.~Wang, Signature decomposition method applying to pair trading, Journal of Futures Markets 46~(3) (2026) 582--603.
\newblock \href {http://dx.doi.org/https://doi.org/10.1002/fut.70075} {\path{doi:https://doi.org/10.1002/fut.70075}}.

\bibitem{zbMATH08195857}
C.~Cuchiero, P.~Schmocker, J.~Teichmann, Global universal approximation of functional input maps on weighted spaces, Constr. Approx. 63~(2) (2026) 537--612.
\newblock \href {http://dx.doi.org/10.1007/s00365-025-09726-3} {\path{doi:10.1007/s00365-025-09726-3}}.

\bibitem{Vidyamurthy2004PairsTrading}
G.~Vidyamurthy, Pairs Trading: Quantitative Methods and Analysis, John Wiley \& Sons, 2004.

\bibitem{SirignanoCont2019UniversalFeatures}
J.~Sirignano, R.~Cont, Universal features of price formation in financial markets: Perspectives from deep learning, Quantitative Finance 19~(9) (2019) 1449--1459.
\newblock \href {http://dx.doi.org/10.1080/14697688.2019.1622295} {\path{doi:10.1080/14697688.2019.1622295}}.

\bibitem{ElliottVanderHoekMalcolm2005}
R.~J. Elliott, J.~van~der Hoek, W.~P. Malcolm, Pairs trading, Quantitative Finance 5~(3) (2005) 271--276.
\newblock \href {http://dx.doi.org/10.1080/14697680500149370} {\path{doi:10.1080/14697680500149370}}.

\end{thebibliography}

\clearpage
\appendix

\section{Explicit Tensor Computation}\label[appendix]{app:tensor explicit computation}

This appendix provides detailed algebraic expansion underlying the
proof of \Cref{thm:quadratic_reduction_signature}.
We show term by term how the objective functional decomposes into a
quadratic form in $\theta = \vec{B}$, deriving the matrix $A$, vector
$b$, and scalar $c$ explicitly.

We recall that the objective functional \Cref{eq:objective_framework} reduces to
\begin{equation}
    J(\theta) = \theta^\top A\theta + b^\top\theta + c,
\end{equation}
where $\theta = \vec{B}\in\R^{nm}$ and $Q_0 \in \R^n$.
We write $v_t = Bx_t$, $y_t := \int_0^t x_u\,\dd u$, and
$\alpha_t = Kx_t$.
Under the column-major convention of \Cref{sec:quadratic reduction},
\begin{equation}
    v_t = (x_t^\top\otimes I_n)\theta,
\qquad
Q_t = Q_0 + (y_t^\top\otimes I_n)\theta.
\end{equation}
The following identity will also be useful in the following computations:
\begin{equation}\label{eq:identity tensor product}
    \bigl((y_t^\top\otimes I_n)\theta\bigr)^\top z
=
\theta^\top(y_t\otimes z),
\qquad z\in\R^n.
\end{equation}
The objective functional \Cref{eq:objective_framework} contains five groups of terms; we expand each separately and collect the contributions to $A$, $b$, and $c$.
\paragraph{Signal reward.}
\begin{align}
Q_t^\top \alpha_t
&= \left[ Q_0 + (y_t^\top\otimes I_n)\theta\right]^{\top} Kx_t \nonumber\\
&= Q_0^{\top}K x_t
+ \theta^{\top}\left(y_t \otimes Kx_t\right).
\end{align}
The first term contributes to $c$, and the second contributes to $b$.

\paragraph{Impact cost.}
\begin{align}
 v_t^\top \tilde\Lambda v_t
 &=  \left[(x_t^\top\otimes I_n)\theta\right]^{\top}
 \tilde\Lambda
 \left[(x_t^\top\otimes I_n)\theta\right] \nonumber\\
 &= \theta^{\top}\left(x_tx_t^{\top}\otimes\tilde{\Lambda}\right)\theta .
\end{align}
This contributes negatively to $A$ because impact enters the objective with a
minus sign.

\paragraph{Inventory-risk term.}
Writing $Q_t = Q_0 + (y_t^\top\otimes I_n)\theta$:
\begin{align}
 \phi\,Q_t^\top \Sigma Q_t
 &= \phi\,Q_0^\top\Sigma Q_0
   +2\phi\,\theta^\top(y_t\otimes \Sigma Q_0) \nonumber\\
 &\quad
   +\phi\,\theta^\top(y_ty_t^\top\otimes\Sigma)\theta .
\end{align}
These three terms contribute to $c$, $b$, and $A$ respectively; the quadratic
block enters $A$ with a negative sign in the objective.

\paragraph{Dollar-neutrality term.}
\begin{align}
 (Q_t^\top P_t)^2
 &= \left[Q_0^{\top}P_t + \theta^{\top} (y_t\otimes P_t)\right]^{2} \nonumber\\
 &= (Q_0^{\top}P_t)^{2}
   + \theta^{\top}(y_ty_t^{\top}\otimes P_tP_t^{\top})\theta \nonumber\\
 &\quad
   +2 (Q_0^\top P_t)\, \theta^{\top} (y_t\otimes P_t).
\end{align}
The  quadratic, linear and constant pieces contribute to $A$, $b$ and $c$ respectively,
with the quadratic block again entering $A$ negatively.

\paragraph{Terminal inventory.}
The objective contains $-\gamma Q_T^\top Q_T$; expanding:
\begin{align}
 Q_T^\top Q_T
 &=\left[ Q_0 + (y_T^\top\otimes I_n)\theta\right]^{\top}
 \left[ Q_0 + (y_T^\top\otimes I_n)\theta\right] \nonumber\\
 &= Q_0^{\top}Q_0
 +2\theta^{\top}(y_T\otimes Q_0) \nonumber\\
 &\quad
 +\theta^{\top}(y_Ty_T^{\top}\otimes I_n)\theta .
\end{align}
Multiplying by $-\gamma$:
\begin{align}
 -\gamma Q_T^\top Q_T
 &= -\gamma\,Q_0^{\top}Q_0
   -2\gamma\,\theta^{\top}(y_T\otimes Q_0) \nonumber\\
 &\quad
   -\gamma\,\theta^{\top}(y_Ty_T^{\top}\otimes I_n)\theta .
\end{align}

\section{Shuffle-Algebra Formulation}
\label[appendix]{app:shuffle_algebra}

This appendix presents an intrinsic reformulation of
\Cref{thm:quadratic_reduction_signature} in the language of shuffle
algebras, following the framework of
\cite{optimal_execution,double_execution_signatures}.
The main purpose is to connect the coordinate-based matrix presentation
of \Cref{sec:framework} to the abstract algebraic
structure underlying signatures.
Readers primarily interested in computation may skip this appendix without
loss of continuity; the key identity \eqref{eq:shuffle_product} is the
only result from this language used in the main text.

For a general introduction to signature methods, see
\cite{Chevyrev_2025}.
In this section, we reformulate the problem of \Cref{thm:quadratic_reduction_signature} in the language of shuffle algebras, restricting admissible controls to the class of \emph{signature trading speeds}.

Let $Z$ be a $d$-dimensional geometric rough path, and denote by
\[
S(Z)_{s,t}\in T\bigl((\mathbb{R}^d)\bigr)
\]
its signature over the interval $[s,t]$, where $T((\mathbb{R}^d))$ is the extended tensor algebra introduced in \eqref{eq:extended tensor algebra}. For a word
\[
I=(i_1,\dots,i_k)\in \mathcal A_d,
\]
with $\mathcal A_d$ the set of words over the alphabet $\{1,\dots,d\}$, we write $S^I(Z)_{s,t}$ for the corresponding $I$-th coordinate of the signature.

The algebra $T((\mathbb{R}^d))$ admits the natural dual space
\[
T\bigl((\mathbb{R}^d)^*\bigr),
\]
whose elements act linearly on signatures through the canonical pairing
\[
\langle \cdot,\cdot\rangle:
T\bigl((\mathbb{R}^d)^*\bigr)\times T\bigl((\mathbb{R}^d)\bigr)\to \mathbb{R}.
\]
Thus, for any $\ell\in T((\mathbb{R}^d)^*)$, the quantity
\[
\langle \ell, S(Z)_{s,t}\rangle
\]
defines a real-valued functional of the path $Z$ on $[s,t]$.
Throughout this appendix, $\ell$ denotes a \emph{finite} linear combination
of words, so that each pairing involves only finitely many signature
coordinates.

We define the class of \emph{signature trading speeds} by
\begin{equation}\label{eq:signature trading speed space}
    \mathcal T_{\mathrm{sig}}
    :=
    \left\{
    (s,t)\mapsto \langle \ell, S(Z)_{s,t}\rangle
    :
    \ell\in T\bigl((\mathbb{R}^d)^*\bigr)
    \right\}.
\end{equation}
Equivalently, if $\ell=\sum_{j} \ell_{I_j} e_{I_j},\;\; e_{I_j}\in T\bigl((\mathbb{R}^d)^*\bigr),\; \ell_{I_j}\in \mathbb{R}$ is expanded in the word basis of $T((\mathbb{R}^d)^*)$, then

$$
\langle \ell, S(Z)_{s,t}\rangle
=
\sum_j \ell_{I_j} S^{I_j}(Z)_{s,t}.
$$
As shown in \cite{optimal_execution}, this class is dense in a suitable space of admissible trading speeds, which justifies restricting the control problem to signature-based controls.

For the algebraic manipulations below, it is convenient to identify the dual tensor algebra with the space of formal linear combinations of words,
$$
T\bigl((\mathbb{R}^d)^*\bigr)\cong \mathcal W(\mathcal A_d).
$$
Under this identification, a basis element $e_I\in T((\mathbb{R}^d)^*)$ corresponds to the word $I\in\mathcal A_d$. The vector space structure is given by the usual coefficient-wise addition, while the relevant algebraic operations are the \emph{concatenation product} and the \emph{shuffle product}. In particular, if $f,g\in T((\mathbb{R}^d)^*)$, then the product of the corresponding signature functionals is represented by the shuffle product as in \Cref{sec:signature theory}:
\begin{equation}\label{eq:shuffle property}
\langle f,S(Z)_{s,t}\rangle \, \langle g,S(Z)_{s,t}\rangle
=
\langle f \shuffle g, S(Z)_{s,t}\rangle.
\end{equation}
Concatenation $f \sqcup g$ is a different operation: it adjoins letters in the
tensor basis rather than multiplying the corresponding path functionals.
For words $I$ and $J$, the concatenation $IJ$ labels the coordinate
$S^{IJ}$; the product $S^I S^J$ is instead represented by the shuffle sum in
\Cref{eq:shuffle property}.
In particular, because the path is time-augmented and the time coordinate is the
first channel, the notation
$\ell \sqcup (1)$ denotes the time-integral operation: for any speed
functional $\ell$, the integral
$\int_0^t \langle \ell, S_{0,u}\rangle\,\dd u$ is represented by
$\langle \ell \sqcup (1), S_{0,t}\rangle$.
Here the clock letter (the first channel, denoted $(1)$) is appended on the
\emph{right} of every word in $\ell$; the identity is exact for the geometric
lift because $\dd Z^{1}_u=\dd u$, so right-concatenation by $(1)$ implements
$\int_0^t(\cdot)\,\dd u$ at the level of signature coordinates.
This identification allows nonlinear expressions in the control problem to
be rewritten as linear functionals of the signature, which is the key
algebraic simplification underlying this framework.
\subsection{Dictionary for the Main Reduction}
The shuffle-algebra viewpoint is useful because it explains why products of
signature-linear quantities remain linear after moving to higher tensor levels.
The dictionary is as follows:

Let $g^i$ and $\ell^i$ be dual tensor elements such that
\begin{equation}
\alpha_t^i=\langle g^i,S_{0,t}(Z)\rangle,\qquad
v_t^i=\langle \ell^i,S_{0,t}(Z)\rangle
\end{equation}
with $i\in \{1,...,n\}$ indexing the assets. The time channel is the first
word letter and the traded price channels follow it, so the price of asset
$i$ is represented by the first-level coordinate $e_{(i+1)}$ plus its
initial value (unlike the coordinate displays of \Cref{subsec:level2_coords},
which index channels from $0$, this appendix indexes letters from $1$).
This pure signature representation assumes raw price channels; if the
signature is built from transformed or normalised inputs, the
dollar-neutrality block must instead be estimated as the mixed price-feature
moment in \Cref{eq:A_tensor_explicit}.
It follows that we can express the inventory and price process as
\begin{align}\label{eq:inventory in shuffle algebra form}
    Q^{i}_t &= Q_0^{i}+\int_0^tv_u^{i}\D u \\
         &= \langle Q_0^{i} \emptyset+ \ell^{i} \sqcup (1),S_{0,t}(Z)\rangle\\
    P_t^{i} &= \langle e_{(i+1)} + P_0^{i}\,\emptyset,S_{0,t}(Z) \rangle,
\end{align}
where $\emptyset \in \mathcal W(\mathcal{A}_d)$ denotes the empty word, i.e. with some slight abuse of notation  \[\langle \emptyset, S_{s,t}(Z)\rangle = 1.\] 
It follows that the signal term $Q_t^{\top}\alpha_t$ can be written as
\begin{align}\label{eq:signal term shuffle algebra}
    Q_t^{i}\alpha^{i}_t &= \langle Q_0^{i} \emptyset+ \ell^{i} \sqcup (1),S_{0,t}(Z)\rangle \langle g^{i},S_{0,t}(Z)\rangle\\
    &=\langle \left[Q_0^{i} \emptyset+ \ell^{i} \sqcup (1)\right]\shuffle g^{i},S_{0,t}(Z)\rangle 
\end{align}
Similarly, the temporary impact terms can be expressed as:
\begin{align}\label{eq:impact_term_shuffle_algebra}
v_t^{i}\tilde{\Lambda}^{ij}v_t^{j} &= \tilde{\Lambda}^{ij} \langle \ell^{i},S_{0,t}(Z)\rangle \langle\ell^{j},S_{0,t}(Z)\rangle \\
   &=\tilde{\Lambda}^{ij} \langle \ell^{i} \shuffle \ell^{j},S_{0,t}(Z)\rangle,
\end{align}
where we used the crucial shuffle property \Cref{eq:shuffle property}.
To ease the notation, define
\begin{equation}
    F^{i}=Q_0^{i} \emptyset+ \ell^{i} \sqcup (1).
\end{equation}
Analogously, the $\phi$ term becomes
\begin{align}\label{eq:phi_term_shuffle_algebra}
Q_t^{i}\Sigma^{ij}Q_t^{j}
&=
\Sigma^{ij}
\bigl\langle F^i,S_{0,t}(Z)\bigr\rangle
\bigl\langle F^j,S_{0,t}(Z)\bigr\rangle
\nonumber\\
&=
\Sigma^{ij}
\bigl\langle F^i \shuffle F^j,S_{0,t}(Z)\bigr\rangle .
\end{align}
The dollar neutral term in \Cref{eq:objective_framework} can be written as
\begin{align}
\mathcal N_i &:=F^i
\shuffle
\bigl[e_{(i+1)} + P_0^i \emptyset\bigr],
\label{eq:Ai_definition}
\\
Q_t^\top P_t
&=
\sum_i \langle \mathcal N_i, S_{0,t}(Z)\rangle,
\label{eq:qtpt_split}
\\
\bigl(Q_t^\top P_t\bigr)^2
&=
\Bigl\langle
\Bigl(\sum_i \mathcal N_i\Bigr)^{\shuffle 2},
\, S_{0,t}(Z)
\Bigr\rangle.
\label{eq:qtpt_squared_split}
\end{align}
The terminal inventory penalty is expressed as
\begin{align}
Q_T^{i}Q_T^{i}
&=
\bigl\langle F^i,S_{0,T}(Z)\bigr\rangle^2 \nonumber\\
&=
\bigl\langle (F^{i})^{\shuffle 2},S_{0,T}(Z)\bigr\rangle .
\end{align}
Set
\begin{equation}
\label{eq:appendix_B_definition}
\begin{aligned}
\mathcal B
&:= \sum_i F^{i} \shuffle g^i
- \sum_{i,j}\tilde{\Lambda}^{ij} \ell^i \shuffle \ell^j \\
&\quad - \phi\sum_{i,j}\Sigma^{ij} F^{i} \shuffle F^{j}
- \eta \Bigl(\sum_i \mathcal N_i\Bigr)^{\shuffle 2}.
\end{aligned}
\end{equation}
It finally follows that the objective functional \Cref{eq:objective_framework} can be rewritten as a combination of concatenations and shuffle products of linear functionals on the signature as
\begin{equation}
\label{eq:J_shuffle_form}
\begin{aligned}
J(\ell)
&=
\mathbb{E}\Biggl[
\int_0^T
\bigl\langle \mathcal B, S_{0,t}(Z)\bigr\rangle
\, \D t
- \gamma
\Bigl\langle \sum_{i}
\bigl(F^{i}\bigr)^{\shuffle 2},
\; S_{0,T}(Z)
\Bigr\rangle
\Biggr]
\\
&=
\mathbb{E}\Biggl[
\Bigl\langle
\mathcal B\sqcup (1)- \gamma \sum_{i}\bigl(F^{i}\bigr)^{\shuffle 2},
S_{0,T}(Z)
\Bigr\rangle
\Biggr]
\\
&=
\Bigl\langle
\mathcal B\sqcup (1)- \gamma \sum_{i}\bigl(F^{i}\bigr)^{\shuffle 2},
\mathbb{E}\left[S_{0,T}(Z)\right]
\Bigr\rangle ,
\end{aligned}
\end{equation}
where in the last equation we crucially moved the expectation to the signature components only, as the left-hand side of the inner product does not contain stochastic terms.
This dictionary is the algebraic content behind
\Cref{thm:quadratic_reduction_signature}: every product created by the
objective is (at most) quadratic in the policy coefficients, and its random coefficient is a signature or mixed price-feature moment that can be estimated before the policy is solved.

\begin{remark}[Truncation level required by the shuffle representation.]
If the trading speed coefficients $\ell^i$ are supported on words of
length at most $N$, then the shuffle products appearing in the
objective generally involve signature coordinates above level $N$.
For example, the terminal term $F^i\shuffle F^i$ may require levels
up to $2N+2$, while the squared dollar-neutrality term may require
levels up to $2N+4$ when prices are represented as first-level
signature coordinates. Thus the shuffle formulation is exact only when
the expected signature is available, at least, to the corresponding minimum level.
\end{remark}

\section{Moment Inputs and OU Gram Blocks}
\label[appendix]{sec:level2_structure}
This section connects the abstract quadratic reduction to the
quantities that must be estimated before solving the policy. Indeed, the quadratic reduction of \Cref{sec:quadratic reduction} leaves a concrete
calibration problem: before the policy coefficients can be solved, the moment
blocks entering $A$ and $b$ must be estimated or computed. In the notation of
\Cref{thm:quadratic_reduction_signature}, the main inputs are
\[
    \E[\int_0^T x_tx_t^\top\,\dd t],\qquad
    \E[\int_0^T y_ty_t^\top\,\dd t],\qquad
    \E[y_Ty_T^\top],
\]
together with the signal block $\E[\int_0^T y_t\otimes Kx_t\,\dd t]$ and, when
the dollar-neutrality penalty is active, the mixed price-feature block
$\E[\int_0^T y_ty_t^\top\otimes P_tP_t^\top\,\dd t]$. 

The full-signature optimiser used in the synthetic simulations of \Cref{subsec:synthetic_results} estimates these
blocks empirically from training paths. Similarly, in \Cref{subsec:historical_backtest} we computed the expected signature via past trading windows of fixed length. This section identifies a
low-dimensional part of the same calibration problem that admits closed-form
benchmarks under a standard Ornstein--Uhlenbeck spread model. Those benchmarks
are then used in \Cref{subsec:gram_validation} to check the empirical
moment-estimation pipeline. Throughout, $x_t$ denotes the full level-$N$ signature coordinate vector,
while $\psi_t = (1,t,S_t)$ denotes a projected low-dimensional basis used only for the
analytic moment calculations.

The section is organised as follows. \Cref{subsec:level2_coords} records the
signature coordinates behind the empirical moment blocks. \Cref{sec:ou_projection}
introduces the OU spread projection and derives the closed-form Gram matrix
$G_\psi$. \Cref{subsec:Gy_OU} gives the integrated-basis Gram matrix $G_r$.
\subsection{Level-N Signature Coordinates}
\label{subsec:level2_coords}
Before introducing the projected OU basis, we fix the full-signature coordinate
notation used by the empirical moment blocks. For readability, start with the
minimal two-asset price path
\begin{equation}
    Z_t = (Z_t^0,Z_t^1,Z_t^2) := (t,P_t^{(1)},P_t^{(2)}) \in \R^3
\end{equation}
and let $S^{\le N}(Z)_{0,t}$ be its truncated geometric signature of depth
$N$\footnote{We write $m=\dim(S_{0,t}^{\leq N}(\mathbf{Z}))$ for the length of
the level-$N$ signature vector.}.
The associated coordinate vector in $\R^{m}$ is
\begin{equation}\label{eq:level2_coordinates}
x_t
:=
\Big(
1,
\Delta Z_t^0,
\Delta Z_t^1,
\Delta Z_t^2,
\mathbb Z_t^{0,0},
\mathbb Z_t^{0,1},
\ldots,
\mathbb Z_t^{2,2},\ldots
\Big)^\top, 
\end{equation}
where $\Delta Z_t^i := Z_t^i-Z_0^i$ and
\begin{equation}
\mathbb Z_t^{i,j}
=
\int_{0<u_1<u_2<t}
\circ \dd Z_{u_1}^i \circ \dd Z_{u_2}^j.
\end{equation}
Thus, focusing on the first two levels, the coordinate vector contains the constant term, the three
first-level increments, and the nine second-level iterated integrals.

\begin{remark}\label{rem:dim_path_implementation}
The exposition above uses the minimal three-channel information path
$Z_t=(t,P_t^{(1)},P_t^{(2)})$. The numerical implementation in
\Cref{sec:numerics} appends the rolling $z$-score as a fourth channel,
$Z_t=(t,P_t^{(1)},P_t^{(2)},z_t)\in\R^4$ (as in \Cref{example:z_score}). The
level-two coordinate vector then has $m=1+4+16=21$ entries. All
arguments below extend channel-by-channel with no algebraic change; we keep
$d_z=3$ in the displays purely for readability.
\end{remark}

\begin{remark}[Interpretation of the level-two block]
\label{rem:level2_interpretation}
Because time is one of the channels, the shuffle identity yields
$$
\mathbb Z_t^{0,i} + \mathbb Z_t^{i,0}
=
\Delta Z_t^0\Delta Z_t^i
=
t\Delta P_t^{(i)},
\qquad i=1,2.
$$
Hence the symmetric time-price combinations recover time-weighted price
increments, while the antisymmetric combinations
$\mathbb Z_t^{0,i}-\mathbb Z_t^{i,0}$ distinguish whether price moves
occur early or late in the trading window. Likewise,
$\mathbb Z_t^{1,2}-\mathbb Z_t^{2,1}$ is the signed L\'evy area between
the two price channels and captures lead--lag information that is
invisible at level one.
\end{remark}

\subsection{Projected Bases for Analytical Moment Calculations}
\label{sec:ou_projection}

The full coordinate vector \Cref{eq:level2_coordinates} is the correct
level-$N$ state, but its moment tensors are model-specific and must
typically be estimated empirically.
For analytic insight, and to provide noise-free matrix blocks, it
is useful to work with smaller, economically interpretable bases.
We denote such reduced bases by $\psi_t$ to emphasise that they are
deliberate model reductions, not relabellings of the full signature state.
The key point is that the Gram matrices of $\psi_t$ can sometimes be
computed in closed form under a specific spread model.
In the present numerical implementation these formulas serve two purposes: they check the
empirical moment-estimation pipeline and, used
directly in the policy solve, they calibrate a projected policy with lower
sampling error than its Monte-Carlo counterpart.
Concretely, $G_\psi$ validates the impact block $\E[\int_0^T x_tx_t^\top\,\dd t]$,
while $G_r$ and $\E[r_Tr_T^\top]$ validate the inventory-risk and
terminal-liquidation blocks; the dollar-neutrality price-feature block does not
reduce to these Gram matrices and stays empirical unless the reference-price
approximation $P_tP_t^\top\approx P_0P_0^\top$ is imposed
(cf.\ \Cref{rem:neutrality_P0}).
Combining them with empirical higher-order geometric blocks is a natural
extension but is not the optimiser used for the reported full-signature trades.
Accordingly, the projected basis is an implementation device: the full theorem
remains stated for $x_t$, while $\psi_t$ isolates the part of the problem for
which exact moment calculations are especially transparent.

\paragraph{OU spread projection.}

In pairs-trading applications \cite{GatevGoetzmannRouwenhorst2006Pairs,Vidyamurthy2004PairsTrading,AvellanedaLee2010StatArb}, let $S_t$ denote a one-dimensional spread
coordinate extracted from $(P_t^{(1)},P_t^{(2)})$; the synthetic
experiment of \Cref{sec:numerics} instantiates this coordinate as the log
spread $S_t=\log P_t^{(1)}-\log P_t^{(2)}$, which equals the OU component
$X_t$ in \Cref{subsec:application_dgp}. Assume that
$S$ follows the Ornstein--Uhlenbeck dynamics \cite{ElliottVanderHoekMalcolm2005,AvellanedaLee2010StatArb}
\begin{equation}\label{eq:OU}
\dd S_t = -\kappa(S_t-\mu)\,\dd t + \sigma\,\dd W_t,
\qquad \kappa>0.
\end{equation}
Writing $a:=S_0-\mu$, its first two moments are
\[
\begin{aligned}
m(t) &:= \E[S_t] = \mu + ae^{-\kappa t}, \\
v(t) &:= \Var(S_t) = \frac{\sigma^2}{2\kappa}\bigl(1-e^{-2\kappa t}\bigr).
\end{aligned}
\]

Consider the reduced basis
\begin{equation}\label{eq:x_OU_basis}
\psi_t := (1,t,S_t)^\top \in \R^3.
\end{equation}
Its Gram matrix
\[
G_\psi := \E\int_0^T \psi_t\psi_t^\top\,\dd t
\]
is explicit. Writing
$M_0:=\int_0^T m(t)\,\dd t$,
$M_1:=\int_0^T tm(t)\,\dd t$, and
$M_2:=\int_0^T (m(t)^2+v(t))\,\dd t$, we have
\[
G_\psi =
\begin{pmatrix}
T & \frac{T^2}{2} & M_0 \\
\frac{T^2}{2} & \frac{T^3}{3} & M_1 \\
M_0 & M_1 & M_2
\end{pmatrix},
\]
with
\begin{align*}
\int_0^T m(t)\,\dd t
&=
\mu T + \frac{a}{\kappa}\bigl(1-e^{-\kappa T}\bigr), \\
\int_0^T tm(t)\,\dd t
&=
\mu\frac{T^2}{2}
+
a\Bigl(
\frac{1-e^{-\kappa T}}{\kappa^2}
- \frac{Te^{-\kappa T}}{\kappa}
\Bigr), \\
\int_0^T \E[S_t^2]\,\dd t
&=
\mu^2T
+
2\mu a\frac{1-e^{-\kappa T}}{\kappa} \nonumber\\
&\quad
+
a^2\frac{1-e^{-2\kappa T}}{2\kappa} \nonumber\\
&\quad
+ \frac{\sigma^2}{2\kappa}T
- \frac{\sigma^2}{4\kappa^2}\bigl(1-e^{-2\kappa T}\bigr).
\end{align*}

The basis \Cref{eq:x_OU_basis} is a tractable mean-reversion proxy, not the
full signature state. Its purpose is to isolate the part of the reduced
quadratic programme that already admits closed-form calibration under a standard
spread model.
In particular, $(1,t,S_t)$ captures the baseline mean-reversion economics:
time, current spread level, and the accumulated effect of carrying inventory
against that spread.

\subsection{Integrated-Basis Gram Matrix}
\label{subsec:Gy_OU}

For the projected basis \Cref{eq:x_OU_basis}, define
\begin{equation}\label{eq:y_basis_OU}
r_t
:=
\int_0^t \psi_u\,\dd u
=
\Bigl(t,\frac{t^2}{2},I_t\Bigr)^\top,
\qquad
I_t := \int_0^t S_u\,\dd u.
\end{equation}
The corresponding inventory Gram matrix is
\begin{equation}
G_r := \E\int_0^T r_t r_t^\top\,\dd t.
\end{equation}
For deterministic $S_0$, the OU covariance is
\begin{equation}\label{eq:OU_cov}
\Cov(S_t,S_s)
=
\frac{\sigma^2}{2\kappa}
\Bigl(e^{-\kappa|t-s|}-e^{-\kappa(t+s)}\Bigr),
\end{equation}
the integrated spread has mean
\begin{equation}\label{eq:EI}
\E[I_t]
=
\mu t + \frac{a}{\kappa}\bigl(1-e^{-\kappa t}\bigr)
\end{equation}
and variance
\begin{equation}\label{eq:VarI_closed}
\Var(I_t)
=
\frac{\sigma^2}{\kappa^2}t
- \frac{\sigma^2}{\kappa^3}\bigl(1-e^{-\kappa t}\bigr)
- \frac{\sigma^2}{2\kappa^3}\bigl(1-e^{-\kappa t}\bigr)^2.
\end{equation}
Hence
\begin{equation}\label{eq:EI2}
\E[I_t^2] = \bigl(\E[I_t]\bigr)^2 + \Var(I_t).
\end{equation}
The time-polynomial entries of $G_r$ are immediate,
\begin{equation}
\begin{aligned}
\int_0^T t^2\,\dd t &= \frac{T^3}{3}, \\
\int_0^T t\frac{t^2}{2}\,\dd t &= \frac{T^4}{8}, \\
\int_0^T \Bigl(\frac{t^2}{2}\Bigr)^2\dd t &= \frac{T^5}{20},
\end{aligned}
\end{equation}
while the mixed entries involving $I_t$ are obtained by setting
\begin{equation}
R_1:=\int_0^T t\,\E[I_t]\,\dd t,
\qquad
R_2:=\int_0^T \frac{t^2}{2}\E[I_t]\,\dd t,
\end{equation}
where
\begin{align*}
R_1
&=
\mu\frac{T^3}{3}
+
\frac{aT^2}{2\kappa}
- \frac{a}{\kappa}
\Bigl(
\frac{1-e^{-\kappa T}}{\kappa^2}
- \frac{Te^{-\kappa T}}{\kappa}
\Bigr), \\
R_2
&=
\mu\frac{T^4}{8}
+
\frac{aT^3}{6\kappa} \\
&\quad
- \frac{a}{2\kappa}\int_0^T t^2e^{-\kappa t}\,\dd t,
\end{align*}
with
\begin{equation}\label{eq:int_t2_exp}
\begin{aligned}
\int_0^T t^2e^{-\kappa t}\,\dd t
&=
\frac{2}{\kappa^3}\bigl(1-e^{-\kappa T}\bigr) \\
&\quad
- e^{-\kappa T}\Bigl(\frac{T^2}{\kappa}+\frac{2T}{\kappa^2}\Bigr).
\end{aligned}
\end{equation}
Finally,
\begin{equation}\label{eq:int_EI2}
\int_0^T \E[I_t^2]\,\dd t
=
\int_0^T \bigl(\E[I_t]\bigr)^2\,\dd t
+
\int_0^T \Var(I_t)\,\dd t,
\end{equation}
and the right-hand side is an elementary polynomial-exponential
expression after substituting \Cref{eq:EI,eq:VarI_closed}. Thus the
entire matrix $G_r$ is available in closed form.

The signal block $\E[\int_0^T r_t\otimes K\psi_t\,\dd t]$ entering $b$ is closed
form by the same calculation: with the projected signal proportional to the
spread $S_t$, its entries are $\int_0^T t\,m(t)\,\dd t$,
$\int_0^T \tfrac{t^2}{2}m(t)\,\dd t$ (both already available from $G_\psi$), and
$\int_0^T \E[I_tS_t]\,\dd t$, where
\begin{equation}\label{eq:cov_I_S}
\Cov(I_t,S_t)
=
\int_0^t \Cov(S_u,S_t)\,\dd u
=
\frac{\sigma^2}{2\kappa^2}\bigl(1-e^{-\kappa t}\bigr)^2,
\end{equation}
so that $\E[I_tS_t]=\E[I_t]\,m(t)+\Cov(I_t,S_t)$ is again elementary after
substituting \Cref{eq:EI}.

\begin{remark}[Neutrality moment under OU projection]
\label[remark]{rem:neutrality_P0}
The dollar-neutrality penalty contributes a block
$(y_t y_t^\top)\otimes \eta P_tP_t^\top$ to the curvature matrix
(cf.\ \Cref{eq:A_tensor_explicit}), which requires the mixed moment
$\E[\int_0^T r_tr_t^\top \otimes P_tP_t^\top\,\dd t]$. Under the OU spread
projection the price vector $P_t$ is not a function of $\psi_t$, so this
mixed moment does not reduce to the OU Gram blocks. In a projected analytic
approximation one may replace
$P_tP_t^\top$ by its reference value $P_0P_0^\top$ inside the
neutrality block, reducing it to $\eta\bigl(P_0P_0^\top\bigr)\otimes G_r$.
This is a reference-price approximation: it evaluates the dollar-neutrality
penalty around a fixed price scale and should not be interpreted as a general
upper or lower bound. In the
empirical construction no such approximation is required: $P_tP_t^\top$
is observed on each simulated path and the block is estimated directly.
The reported canonical optimiser uses this empirical construction.
\end{remark}

\subsection{Validation and Diagnostics of the Framework}
\label{sec:numerics}

The theoretical framework of
\Cref{sec:framework,sec:level2_structure} is now validated on the same
common-trend log-spread pair introduced in \Cref{subsec:application_dgp}.
The diagnostics serve three purposes: to check that the analytical
Gram matrices of \Cref{sec:ou_projection,subsec:Gy_OU} agree with Monte
Carlo estimates, to visualise the optimised policy on representative paths,
and to quantify how the reduced objective and accounting diagnostics change
when moving from the level-one ($N=1$) to the level-two ($N=2$)
signature specification.
Robustness is assessed through signal-strength, ridge, and sample-size sweeps
and through independent train/test splits, rather than through any claim of
universal level-two dominance.
The reported synthetic workflow is:
\begin{enumerate}[label=(\roman*)]
    \item simulate independent training paths and compute the time-augmented
    signature features $x_t$ and their integrals $y_t$;
    \item assemble the empirical curvature matrix and linear term
    $(\hat A,\hat b)$ using the discrete version of
    \Cref{eq:A_tensor_explicit,eq:b_tensor_explicit};
    \item symmetrise $\hat A$, apply the fixed ridge shift
    $\lambda_{\mathrm{ridge}}$, and solve
    $\hat\theta_\rho=-\frac12(\hat A-\lambda_{\mathrm{ridge}}I)^{-1}\hat b$;
    \item evaluate the fitted policy $v_t=\hat B_\rho x_t$ on held-out paths
    under the same $\dd t$-weighted reduced objective and separately record
    mark-to-market accounting metrics.
\end{enumerate}

All reported policies are estimated from $1000$ Monte Carlo training paths and
evaluated on $1000$ held-out paths, with $1000$ execution orders placed
uniformly over $[0,T]$ on each test path.

The signatures are computed from the discrete piecewise-linear interpolants of
the simulated paths.
The matrix assembly integrates against $\dd t$ over the full simulator grid,
while the backtest evaluates the fitted speed on $1000$ execution buckets with
quadrature weights equal to the bucket widths, which sum to the trading horizon
(see the units convention in \Cref{rem:units}). The trading speed is thus
treated as a rate throughout, so the policy is evaluated under the same
$\dd t$-weighted criterion it was calibrated on.
Features are evaluated at bucket endpoints in both the matrix assembly and
the backtest. This gives a consistent discretisation of the continuous-time
formula, but a strictly predictable live implementation would use left-endpoint
features. The difference is an implementation robustness issue rather than part
of the algebraic reduction. The
price channels $P^{(1)},P^{(2)}$ enter both the signature and the
dollar-neutrality block as raw mid-prices (identity transform), so there is no
scaled-versus-raw price mismatch, and the neutrality block is assembled as the
empirical mixed price-feature moment
$\E[\int_0^T y_ty_t^\top\otimes P_tP_t^\top\,\dd t]$ on each path. The
dollar-neutrality, inventory-risk, and terminal-liquidation terms are soft
penalties rather than desk-enforceable hard constraints; hard constraints are
left to future work.
Thus the experiment is a discrete approximation of the continuous-time
formulas rather than a separate continuous-time solver.
The main numerical findings are threefold.
First, the reduced matrix problem is well behaved under the canonical
calibration: the estimated quadratic form has the expected sign structure, and
the policy can be computed by a single regularised linear solve.
Second, the fitted policy has the expected relative-value structure: it opens
temporary long--short positions against spread dislocations and then reduces
inventory before the terminal horizon.
Third, level-two signature information raises the reduced objective relative
to level one in this synthetic mean-reversion experiment, but the objective gain is tied
to greater turnover and larger temporary inventory exposure.

\subsection{\texorpdfstring{Common-Trend Log-Spread Pair}{Common-Trend Log-Spread Pair}}

The first $\tau$ execution events on each path are masked to zero
so that the policy only trades once the rolling-window estimate of the
$z$-score is fully populated. The baseline path parameters are
\begin{equation}\label{eq:path_parameters_base_set}
\begin{aligned}
\kappa &= 50, & \mu &= 0, & \sigma_M &= 0.02, & \sigma_X &= 0.02, \\
\rho &= 0.3, & \tau &= 100, & \beta &= 1, \\
T &= 1, & P_0^{(1)} &= 30, & P_0^{(2)} &= 5.
\end{aligned}
\end{equation}
For the execution objective, the calibration uses
\begin{equation}\label{eq:solver_parameters_base_set}
\begin{aligned}
\Lambda_1 &= 3\times10^{-3}, & \Lambda_2 &= 3\times10^{-4}, & Q_0 &= (0,0)^\top, \\
\eta &= 10^{-2}, & \phi &= 0, \\
\gamma &= 0.1, & \lambda_{\mathrm{ridge}} &= 10^{-8}.
\end{aligned}
\end{equation}
The canonical impact matrix is diagonal:
$\Lambda_{12}=\Lambda_{21}=0$, so that
$\tilde\Lambda=\operatorname{diag}(\Lambda_1,\Lambda_2)$ in the notation of
\Cref{eq:objective_framework}.
The signal scale is
$c_\alpha=\tfrac12\,\kappa\,\widehat{\operatorname{sd}}(S)\approx5.02\times10^{-2}$,
half the OU reversion drift per unit of spread standard deviation, estimated
from the training paths and recorded in the run manifest.
Cross-impact is part of the model and code interface, but the headline
mean-reverting-pair experiment does not exercise it.
With $\phi=0$, the inventory-risk penalty is absent; the strategy is
driven by the spread signal, temporary impact, dollar-neutrality, and the
terminal inventory penalty. This choice isolates the statistical arbitrage
mechanism: the strategy is allowed to build a temporary long--short position
when the spread signal is strong, but it still pays impact costs and is
penalised for ending the trading window with residual inventory. The simulated market fixes the economic object traded in the rest
of the section: the two assets share a common stochastic log-price trend, their
log spread mean reverts around zero, and the rolling $z$-score supplies the
dimensionless dislocation signal that is converted into alpha by $c_\alpha$.
The following diagnostics and metric-specific comparisons refer to this baseline
market.

\subsection{Analytical versus Empirical Gram Matrices}
\label{subsec:gram_validation}

A key practical advantage of the framework is that the Gram matrices
$G_\psi$ and $G_r$ entering the OU-projected basis (see
\Cref{sec:ou_projection,subsec:Gy_OU}) are available in closed form under the
continuous-time OU spread model.
We verify the implementation and quantify the noise reduction by
comparing the analytical formulas against Monte Carlo estimates as a
function of the sample size $M$.

\Cref{fig:gram_convergence} shows the Frobenius-norm errors
$\|G^{\mathrm{emp}}_\psi(M) - G^{\mathrm{exact}}_\psi\|_F$ and
$\|G^{\mathrm{emp}}_r(M) - G^{\mathrm{exact}}_r\|_F$ for both projected blocks,
together with the standard $O(M^{-1/2})$ Monte Carlo reference slope.
The targets $G_\psi^{\mathrm{exact}}$ and $G_r^{\mathrm{exact}}$ are the
closed-form OU moment integrands of \Cref{sec:ou_projection,subsec:Gy_OU}
evaluated on the \emph{same} simulator time grid as the estimator. Matching the
grid cancels the deterministic time-discretisation error of the Riemann sum
(for example $\int_0^T t\,\dd t=\tfrac12$ versus the right-point value
$\tfrac12+\tfrac{1}{2n_{\mathrm{steps}}}$), so the residual is pure Monte Carlo
sampling error and decays at the $O(M^{-1/2})$ rate; at each sample size the
error is averaged over random resamples of the held-out paths.
The empirical estimator retains non-zero sampling error at finite sample
sizes, whereas the closed-form OU formula removes this source of
estimation noise for the projected OU blocks. This is a calibration diagnostic:
it shows that part of the reduced matrix has a deterministic benchmark under
the projected OU model.
The reported optimiser itself is still assembled empirically from the simulated
full-signature paths.

\begin{figure}
\includegraphics[width=0.85\linewidth]{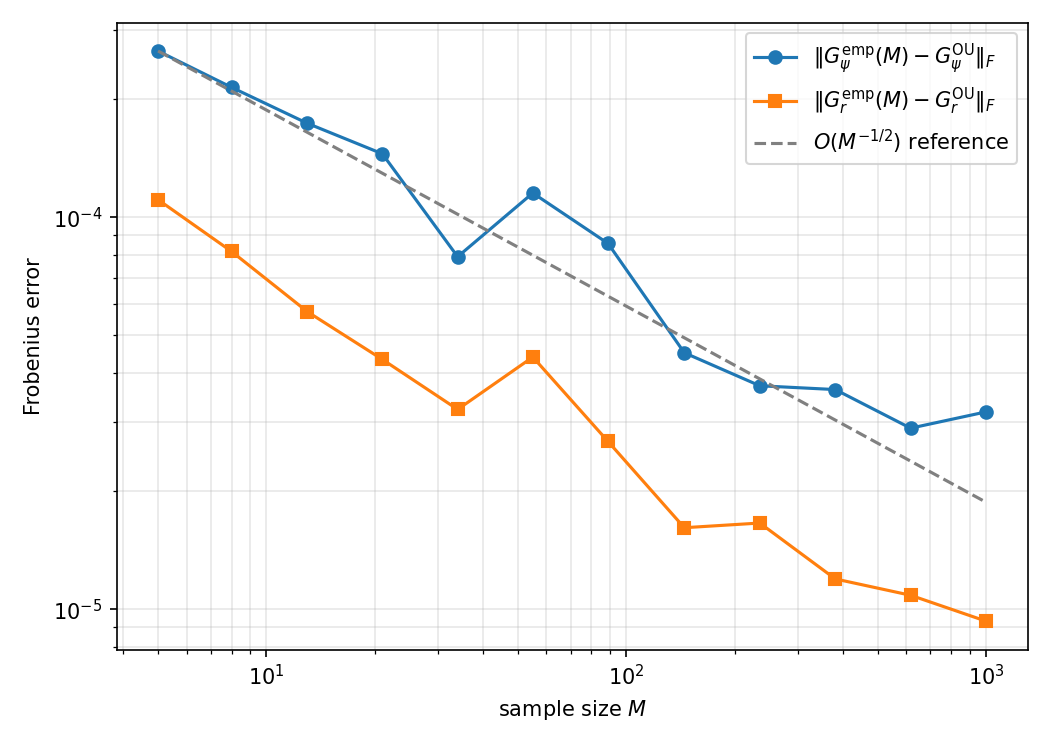}
    \caption{Convergence of the empirically estimated Gram matrices
    $G_\psi$ and $G_r$ to their closed-form OU moment targets, evaluated on the
    simulator time grid so that the comparison isolates Monte Carlo sampling
    error. The error decays at the $O(M^{-1/2})$ Monte Carlo reference rate
    (grey), consistent with consistent estimation of the OU-projected blocks;
    imposing the spread model removes this sampling noise for those blocks.}
    \label{fig:gram_convergence}
\end{figure}

The Gram convergence above shows that the OU-projected moment blocks are
estimated consistently; we now show that using their \emph{closed forms} instead
of Monte Carlo estimates produces a better-calibrated policy. We build a
projected policy on the low-dimensional basis $\psi_t=(1,t,S_t)$ and its integral
$r_t=\int_0^t\psi_u\,\dd u$, and solve the reduced quadratic programme of
\Cref{thm:quadratic_reduction_signature} twice: once with the moment blocks
estimated from $M$ simulated paths, and once with the closed-form OU blocks
$G_\psi$, $G_r$, $\E[r_Tr_T^\top]$ and the closed-form signal term derived in
\Cref{sec:level2_structure}.\footnote{This diagnostic isolates the blocks that
admit OU closed forms: it uses the impact and inventory-risk terms ($\eta=0$,
$\phi=1$, $\Sigma=I_2$) and the terminal penalty, for which $G_\psi$, $G_r$ and
$\E[r_Tr_T^\top]$ suffice; the dollar-neutrality block does not reduce to OU
Gram matrices and is left to the empirical full-signature construction.}
Both are compared against a large independent reference ($4000$ paths).

\Cref{fig:ou_bridge} reports, as a function of $M$, the relative coefficient
error, the relative objective gap, the expected terminal inventory, and the
conditioning of the projected curvature matrix. The closed-form policy attains a relative
coefficient error of $1.3\times10^{-2}$ and objective gap of $2.4\times10^{-3}$
\emph{at any $M$}, below the empirical policy even at $M=1000$
($4.8\times10^{-2}$ and $1.6\times10^{-2}$ respectively) and well below the
small-sample empirical fit ($M=25$: $9.3\times10^{-2}$ and $1.2\times10^{-1}$).
\begingroup
The projected curvature matrices have the same order of conditioning, so
the improvement is driven by variance reduction in the moment blocks rather
than by a material conditioning change, and the terminal-inventory curves agree once the empirical
estimator has converged. This is the concrete payoff of \Cref{sec:level2_structure}:
where the spread model holds, closed-form blocks deliver large-sample
calibration accuracy at no sampling cost, motivating a hybrid implementation
that replaces the noisiest empirical blocks by their analytic counterparts. The
headline $N=2$ policy of this section is still calibrated from the full empirical
signature moments; the bridge applies to the projected OU sub-block only.

\begin{figure}
    \centering
    \includegraphics[width=\linewidth]{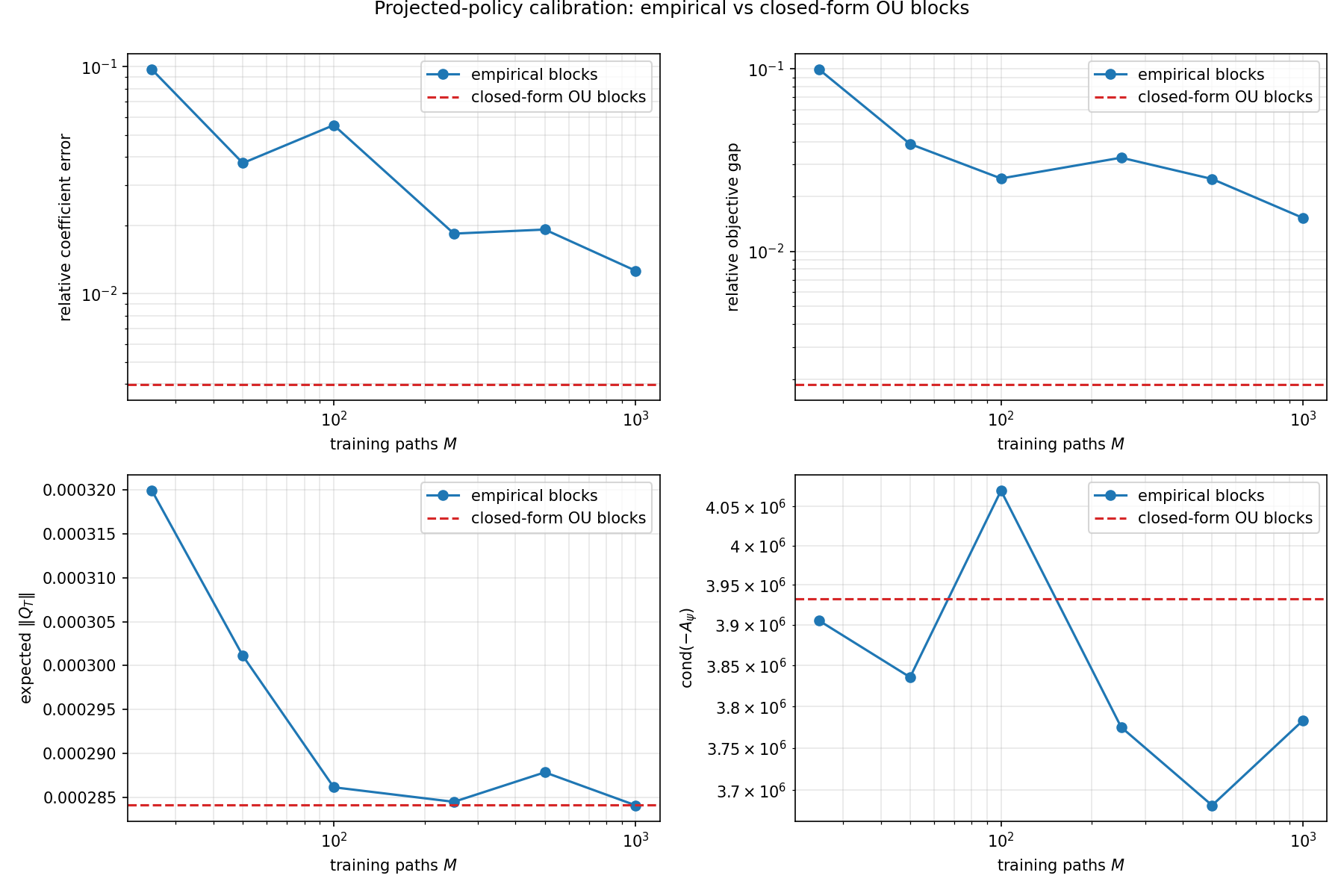}
    \caption{Projected-policy calibration on $\psi=(1,t,S_t)$: empirical
    Monte-Carlo blocks (blue) versus closed-form OU blocks (red dashed), against
    a $4000$-path reference. Closed-form blocks achieve lower relative
    coefficient error and objective gap than the empirical fit at every training
    size $M$, while the projected curvature matrices have comparable
    conditioning: the gain is variance reduction, not regularisation.}
    \label{fig:ou_bridge}
\end{figure}

\section{\texorpdfstring{Numerical Validation of the Quadratic Reduction}{Numerical Validation of the Quadratic Reduction}}
\label[appendix]{subsec:reduction_check}

In this section we verify the central claim of
\Cref{thm:quadratic_reduction_signature} numerically: that the path-dependent
objective collapses \emph{exactly} to the finite quadratic form
$J(\theta)=\theta^\top A\theta+b^\top\theta+c$. On a fixed set of $300$ paths we
assemble $(A,b)$ from the tensor formulas and, independently, recompute the
objective directly from $v_t=Bx_t$, $Q_t=Q_0+By_t$, and $\alpha_t=Kx_t$ by
integrating the running terms of \Cref{eq:objective_framework} against time.
\Cref{fig:reduction_check} plots the two values against each other for the fitted
$\theta^*$ and for $48$ random coefficient vectors spanning a wide range of
objective values. They coincide along the identity line with a maximum relative
error of $2.2\times10^{-7}$, i.e.\ to
floating-point precision. This is an algebraic identity per path rather than a
Monte Carlo statement, so the agreement checks that the tensor assembly of
$(A,b,c)$ implements the reduction correctly.

\begin{figure}
    \centering
    \includegraphics[width=0.7\linewidth]{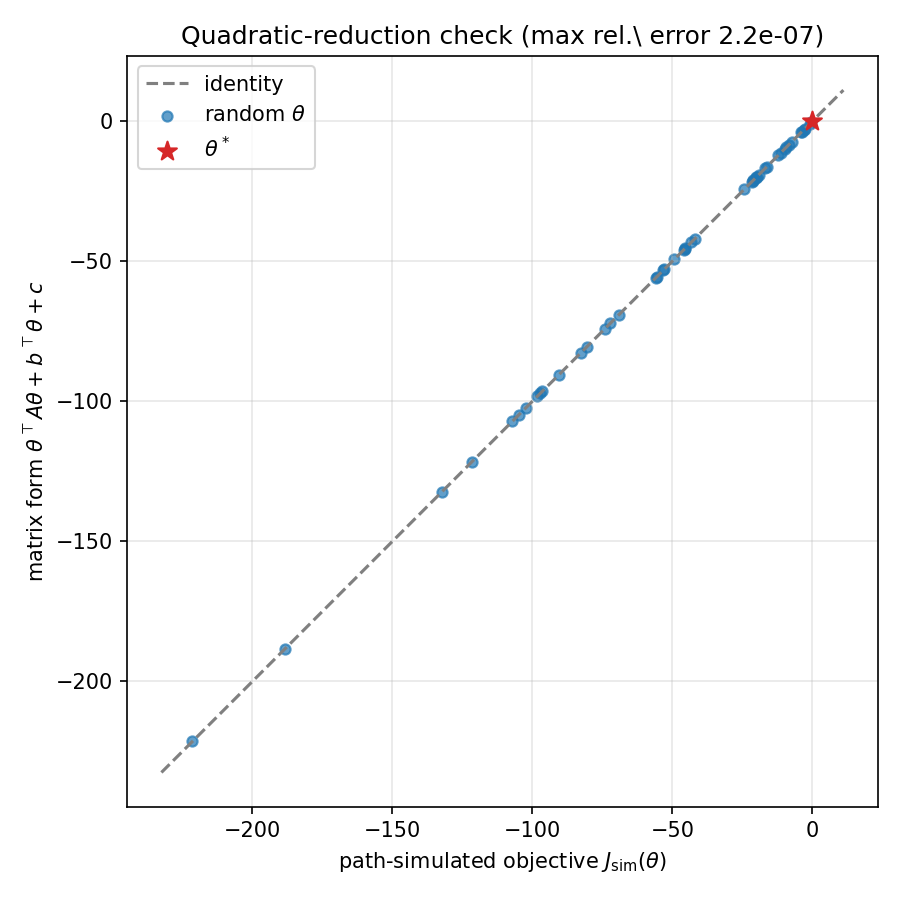}
    \caption{Quadratic-reduction check. The matrix-form value
    $\theta^\top A\theta+b^\top\theta+c$ versus the directly path-simulated
    objective $J_{\mathrm{sim}}(\theta)$, for the optimiser $\theta^*$ (star) and
    $48$ random coefficient vectors. All points lie on the identity line
    (maximum relative error $2.2\times10^{-7}$), checking the matrix
    implementation of \Cref{thm:quadratic_reduction_signature} numerically.}
    \label{fig:reduction_check}
\end{figure}

The next diagnostic, \Cref{fig:matrix_spectrum}, then asks whether the complete
quadratic problem assembled from empirical moment estimates has the sign and
conditioning properties required by the closed-form optimiser.

\subsection{\texorpdfstring{Matrix Diagnostics}{Matrix Diagnostics}}
\label{subsec:matrix_diagnostics}

\Cref{lem:theta_star} requires $A$ to be symmetric negative definite
for the closed-form maximiser $\theta^* = -\frac12 A^{-1}b$ to exist without
regularisation.
\Cref{fig:matrix_spectrum} displays the signed eigenvalue spectrum of $-\hat A$ and
heatmap of the symmetrised matrix $\hat{A}$ obtained from $1000$
training paths at signature order $N = 2$.  The spectrum of $-\hat A$ is
non-negative up to numerical precision and spans many orders of magnitude: only
29 of the 42 directions lie above the ridge level $\lambda_{\mathrm{ridge}}$,
while the remainder form a near-null tail. The ridge shift
$\lambda_{\mathrm{ridge}}=10^{-8}$ therefore makes the linear solve well posed
precisely in those near-null signature directions.
The near-null directions are expected: the level-two feature dictionary
contains correlated time-price and spread coordinates, and the zero-inventory
initial condition means that only directions rewarded by the signal and
regularised by impact or terminal liquidation matter out of sample. The
diagnostic therefore supports the algebraic sign of the objective while also
showing why a small ridge is a practical part of the calibration. The sensitivity analysis below shows that
realised metrics can be materially affected by this choice.
Together, \Cref{fig:gram_convergence} and \Cref{fig:matrix_spectrum} check the
numerical reduction itself: the former checks the moment inputs, while the
latter checks the resulting quadratic form. Only after these checks does it
make sense to interpret the trading paths and metric-specific diagnostics.

\begin{figure}
    \centering
    \includegraphics[width=\linewidth]{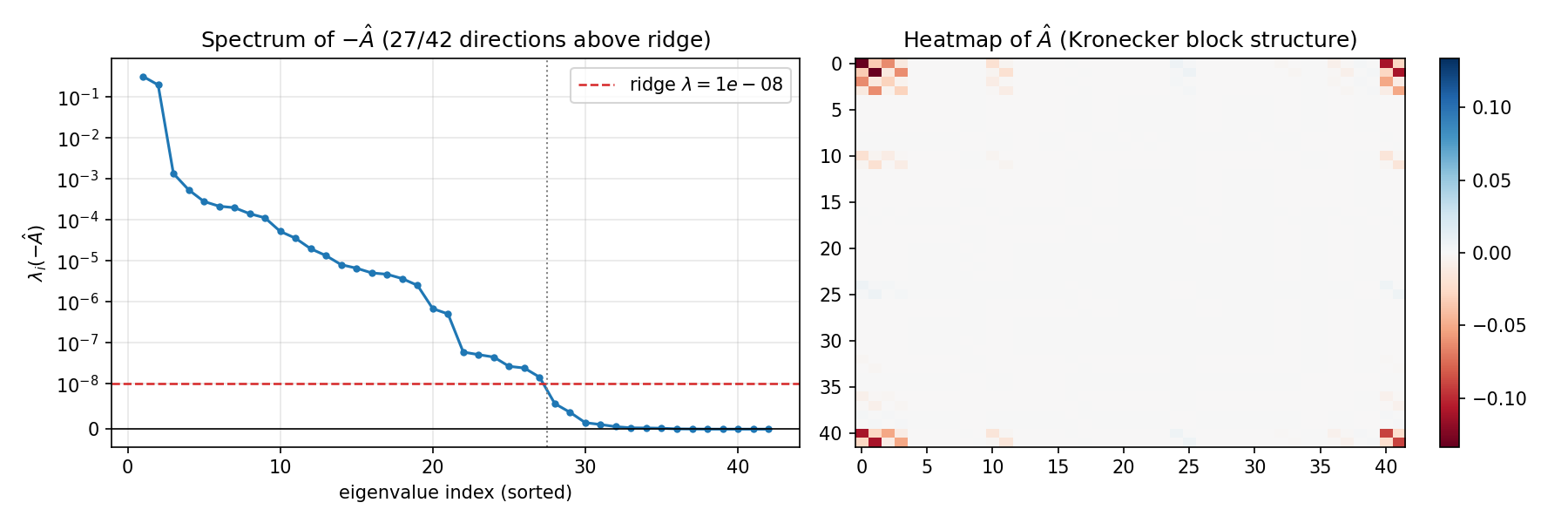}
    \caption{Left: sorted signed eigenvalues of $-\hat A$ (positive
    semidefinite under concavity) on a signed-log scale, with zero and the ridge
    level $\lambda_{\mathrm{ridge}}$ marked. Plotting signed eigenvalues keeps
    any concavity violation visible. Right: heatmap of $\hat A$ showing its Kronecker
    block structure from signature Gram matrices and execution penalties.}
    \label{fig:matrix_spectrum}
\end{figure}

\subsection{\texorpdfstring{Effect of Signature Truncation Order}{Effect of Signature Truncation Order}}
\label{subsec:sig_order}

We compare policies trained at signature orders $N = 1$ and $N = 2$.
For each order, 1000 training paths are used to estimate $A$ and $b$,
and the resulting policy is evaluated on 1000 independent test paths.
\Cref{fig:sig_order_comparison} reports the pathwise distributions of the
reduced objective, terminal wealth, terminal inventory norm, and turnover.

At $N = 1$, the admissible policy class is linear in the current
signature state: trading speed is an affine function of $(1, t, P_t^{(1)},
P_t^{(2)}, z_t)$.%
\footnote{Under the standard signature convention, the level-one coordinates
record the increments $Z_t - Z_0$ (plus the scalar term~$1$); equivalently,
these are affine functions of the current path values.}
While the $z$-score is available, the policy cannot exploit
\emph{cross-products} of features, so information about how the spread
has been evolving (its velocity, curvature, or lead-lag ordering between
the two assets) is inaccessible.
At $N = 2$, the policy gains access to the full level-two block of the
signature, which includes second-order iterated integrals such as the
L\'evy area (\Cref{rem:levy_area_signal}) and
time-weighted price moves.
These capture geometric path information that is strictly beyond level one:
for instance, whether a spread deviation was driven by a fast move in
one asset or a slow drift in both.
Under the baseline calibration used here, the level-two specification raises
both the reduced objective and the terminal mark-to-market wealth accounting
diagnostic relative to the level-one policy, at the cost of higher turnover and slightly larger residual
inventory.
Quantitatively, the mean reduced objective rises from
$4.125\times10^{-5}$ at $N=1$ to $6.438\times10^{-5}$ at $N=2$, while
mean terminal wealth increases from $4.787\times10^{-5}$ to
$2.105\times10^{-4}$. The accompanying increase in turnover, from $0.544$
to $1.204$, shows that the metric gains are obtained through a more active
policy.

\begin{figure}
    \centering
    \includegraphics[width=\linewidth]{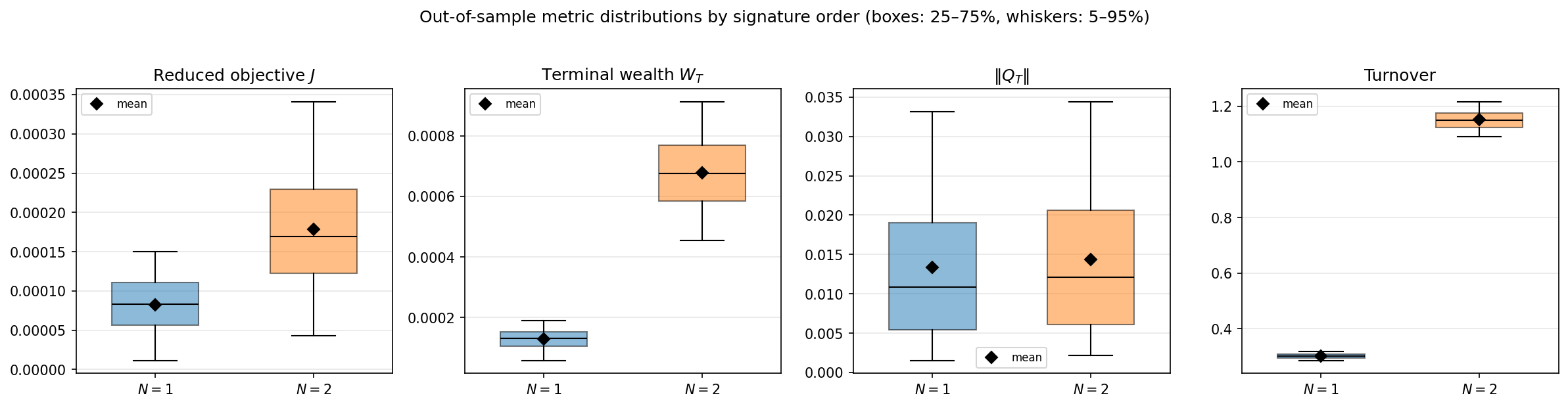}
    \caption{Effect of signature truncation order $N$. Out-of-sample per-path
    distributions (boxes: 25--75\%, whiskers: 5--95\%, diamonds: means) of the
    reduced objective, terminal wealth, terminal inventory norm, and turnover at
    $N=1$ versus $N=2$. Adding the second-order block (time ordering, L\'evy
    area) shifts the reduced-objective and terminal-wealth accounting distributions upward, while the
    turnover and inventory distributions widen markedly, due to the cost of the more
    active level-two policy.}
    \label{fig:sig_order_comparison}
\end{figure}

\subsection{\texorpdfstring{Sensitivity Analysis}{Sensitivity Analysis}}
\label{subsec:sensitivity}
We next examine how the fitted strategy reacts to changes in the
main calibration and execution parameters. The sensitivity runs use the
common-trend log-spread dynamics of \Cref{subsec:application_dgp} and start from
the baseline configuration in \Cref{eq:solver_parameters_base_set}. We vary one
parameter family at a time and keep all other parameters fixed. The parameters considered include the temporary-impact coefficients $\tilde{\Lambda}_{ij}$, the dollar-neutrality parameter $\eta$, the terminal inventory penalty $\gamma$, the inventory-risk parameter $\phi$, the ridge parameter, the signal scale $c_{\alpha}$, and the number of executed trades. For each value, the policy is recalibrated on the training paths and evaluated on the same synthetic test set.

\Cref{fig:sensitivity_sweeps} reports the resulting mean accounting return on turnover
under the canonical synthetic workflow. The purpose of this diagnostic is not
to optimise these parameters, but to identify which modelling parameters
materially affect out-of-sample accounting metrics. The model is especially
sensitive to the ridge parameter, which controls the use of near-null directions
in the empirical curvature matrix: too little regularisation can amplify noisy
signature directions, whereas too much regularisation shrinks the policy. The
operating ridge should therefore be read as a regularised empirical choice, not
as an unregularised optimum.

Overall, the sensitivity analysis supports the interpretation of the baseline experiment. The level-two reduced-objective gain is not presented as the result of a single finely tuned parameter; rather, the policy's behaviour is governed by the expected trade-off between signal strength, execution cost, neutrality, terminal liquidation, and regularisation.

\begin{figure}[h]
    \centering
    \includegraphics[width=\linewidth]{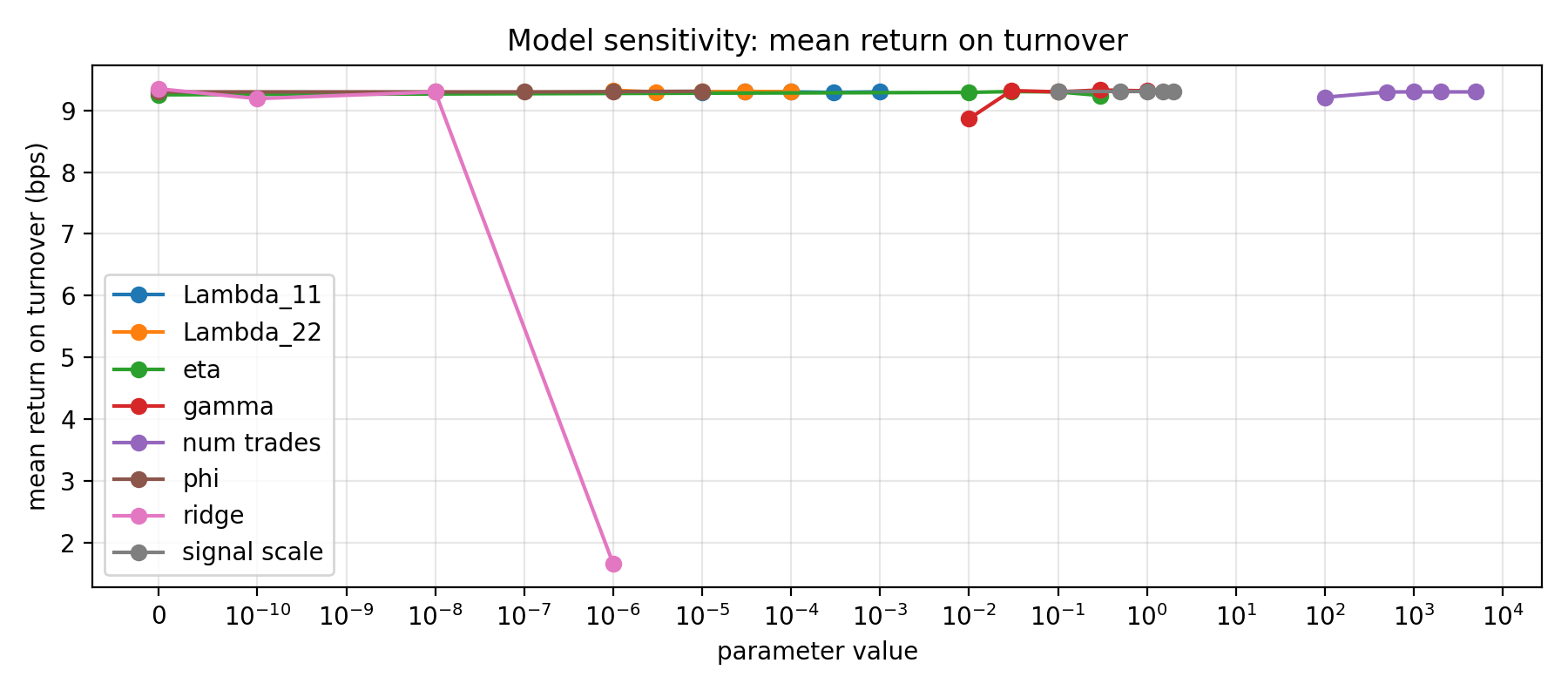}
    \caption{Sensitivity analysis for the synthetic mean-reversion experiment.
    Each curve varies one model or execution parameter while keeping the
    remaining parameters fixed at the baseline calibration set in
    \cref{subsec:sensitivity}. The reported metric is mean return on turnover,
    in basis points, computed on 1000 synthetic test paths. As expected, the accounting
    metric is especially sensitive to the ridge parameter, indicating that too
    little regularisation can amplify noisy signature directions whereas too
    much regularisation shrinks the policy. The baseline calibration remains in
    a stable region for the remaining parameters.}
    \label{fig:sensitivity_sweeps}
\end{figure}

\endgroup

\section*{Disclosure of interest}

The authors report that there are no competing interests to declare.

\section*{Funding}

No funding was received for this work.

\end{document}